\newtheorem{theorem}{Theorem}
\newtheorem{lemma}{Lemma}
\newtheorem{observation}{Observation}
\newtheorem{proposition}{Proposition}
\newtheorem{definition}{Definition}
\newtheorem{example}{Example}
\newenvironment{proof}[1][Proof]{\vspace{\baselineskip}\noindent\textbf{#1:} }{\rule{0.5em}{0.67em}}
\newcommand{\citet}[1]{\citeA{#1}}
\newcommand{\citealp}[1]{\citeR{#1}}
\DeclareDocumentCommand{\citep}{ O{} O{} m }{%
	\ifthenelse%
		{\isempty{#1}}%
		{\ifthenelse%
			{\isempty{#2}}%
			{\cite{#3}}%
			{\cite[#2]{#3}}%
		}%
		{\ifthenelse%
			{\isempty{#2}}%
			{\cite<#1>{#3}}%
			{\cite<#1>[#2]{#3}}%
		}%
}%
\begin{document}

\title{Cycles and Intractability in a Large Class\\ of Aggregation Rules}

\author{\name William S. Zwicker \email zwickerw@union.edu \\
       \addr Mathematics Department, Union College,\\
       Schenectady, NY 12308 USA
}

\maketitle

\begin{abstract}
  We introduce the $(j,k)$-Kemeny rule -- a generalization of Kemeny's voting rule
that aggregates $j$-chotomous weak orders into a $k$-chotomous weak order. Special cases of $(j,k)$-Kemeny include approval voting, the mean rule and Borda mean rule, as well as the Borda count and plurality voting.  Why, then, is the winner problem computationally tractable for each of these other rules, but intractable for Kemeny?  We show that intractability of winner determination for the $(j,k)$-Kemeny rule first appears at the $j=3$, $k=3$ level. The proof rests on a reduction of \emph{max cut} to a related problem on weighted tournaments, and reveals that computational complexity arises from the cyclic part in the fundamental decomposition 
of a weighted tournament into cyclic and cocyclic components. 
Thus the existence of 
majority cycles -- the engine driving both Arrow's impossibility theorem and the Gibbard-Satterthwaite theorem -- also serves as a source of computational complexity in social choice.
\end{abstract}

\section{Introduction}
\label{introduction}

  In their seminal paper, \citet{BTT} showed that determining the optimal Kemeny ranking in an election is $\mathit{NP}$-hard; \citet{HSV} later showed completeness for $P^{\mathit{NP}}_{||}$.  We introduce the $(j,k)$-Kemeny rule, a  generalization wherein ballots are weak orders with $j$ indifference classes (``$j$-chotomous'' weak orders) and the outcome is a weak order with $k$ indifference classes. Different values of $j$ and $k$ yield rules of interest in social choice theory as special cases, including approval voting, the mean rule and Borda mean rule (see \citealp{DP}; \citealp{DPZ}; \citealp{BP}), the Borda count voting rule, and plurality voting.

Why, then, is winner determination computationally tractable for each of these other rules, but \emph{not} tractable for Kemeny?  We show that these other rules each satisfy $j \leq 2$ or $k \leq 2$, while winner intractability for the $(j,k)$-Kemeny rule first appears at the $j=3$, $k=3$ level. This follows from our central result: the well-known $\mathit{NP}$-complete \emph{max cut problem} for undirected graphs can be polynomially reduced to \emph{max 3-OP} (``OP" for ``Ordered Partition"), a version of \emph{max cut} for weighted directed graphs or tournaments
 in which vertices are partitioned into three pieces rather than two, and pieces are ordered.   In this connection, it's worth remarking that $(3,3)$-Kemeny is a new rule -- one that seems potentially useful in some contexts.  Imagine that a small subcommittee is screening job applicants into three categories: those whose files are strong and should be read by the larger committee, those who are not appropriate for the position, and a middle group of applicants worth further consideration if everyone in the top group turns out to be unavailable.  If each member of the subcommittee privately sorts in this fashion, then $(3,3)$-Kemeny seems to be an appropriate way to aggregate those individual views into a collective recommendation.\footnote{This is a variant of the story told by \citet{DPZ} as motivation for the Mean Rule (which is $(2,2)$-Kemeny).}
 
The proof reveals that computational complexity arises from the cyclic component in the orthogonal decomposition $\overrightarrow{w}=\overrightarrow{w}_{\emph{cycle}} + \overrightarrow{w}_{\emph{cocycle}}$ of the weighted tournament induced by a profile, in which $\overrightarrow{w}_{\emph{cycle}}$ serves as a measure of underlying tendency towards a cycle of majority preference, while $\overrightarrow{w}_{\emph{cocycle}}$ resists that tendency (see \citealp{Young}; \citealp{Z}). 
  Thus majority cycles -- the engines driving both Arrow's impossibility theorem and the Gibbard-Satterthwaite theorem -- also serve as a source of computational complexity in social choice.   

The rest of the paper is organized as follows: Section \ref{Preliminaries} introduces some notation and key concepts, while Section \ref{Decomp} briefly reviews the orthogonal decomposition of a weighted tournament into cyclic and cocyclic components.  Readers to whom the decomposition is new may wish to read the exposition by \citet{DPZ}, which includes additional diagrams and worked examples, along with proofs that the two components are orthogonal under the standard Euclidean inner product and that the corresponding vector spaces are complements in the Euclidean space of all possible edge-weightings of a tournament (so that the decomposition always exists and is unique).

Section \ref{MaxCutSection} dissects the relationship between the standard \emph{max cut} problem for (weighted, undirected) graphs, and our version \emph{max }$k$-\emph{OP} for weighted tournaments.  While standard \emph{max cut} is $\mathit{NP}$-complete for vertex partitions into two or more pieces, the directed version is polynomial-time for partitions into two pieces and also for partitions into more pieces when the cyclic component is zero, aka the \emph{purely acyclic case}. For the general case it is $\mathit{NP}$-complete for partitions into $k \geq 3$ pieces. 
Table 1 summarizes these results, which appear as Theorems 1 (two hardness results) and 2 (two easiness results).  Together, they show that the cyclic component functions as the sole source of intractability in \emph{max }$k$-\emph{OP}; the proofs explicate this critical distinction between partitions into two vs. three pieces.

\begin{table}
\begin{centering}
\begin{tabular}{|l|l|}
\hline 
\bf{\emph{max }$k$-\emph{OP} special case} & \bf{Complexity}  \tabularnewline
\hline 
\hline 

\emph{max }$2$-\emph{OP} & $\mathit{P}$\hspace{19mm} (Theorem 2.2) \tabularnewline

\hline

Purely acyclic case of \emph{max }$k$-\emph{OP}, $k \geq 3$ & $\mathit{P}$\hspace{19mm} (Theorem 2.1) \tabularnewline

\hline

Purely cyclic case of \emph{max }$k$-\emph{OP}, $k \geq 3$ & $\mathit{NP}$-complete (Theorem 1.1)\tabularnewline

\hline

Transitive case of \emph{max }$k$-\emph{OP}, $k \geq 3$ & $\mathit{NP}$-complete (Theorem 1.2)\tabularnewline

\hline

\end{tabular}\protect\caption{Special cases of \emph{max }$k$-\emph{OP}.  
}

\par\end{centering}

\end{table}

We introduce the $(j,k)$-Kemeny rule in Section \ref{jkSection}, and observe that a number of familiar aggregation rules are special cases (see Table 2).
Calculating outcomes for these rules amounts to optimizing cases of  \emph{max }$k$-\emph{OP}; thus we can transfer complexity results from Section \ref{MaxCutSection} to the outcome determination problem for these rules.  In particular, $j \leq 2$ guarantees that the weighted tournament is purely acyclic ($\overrightarrow{w}_{\emph{cycle}}=\mathbf{0}$) while $k \leq 2$ guarantees that $\overrightarrow{w}_{\emph{cycle}}$ plays no role in the aggregation; neither guarantee applies when $j,k\geq3$.  However, if the profile \emph{happens} to be purely acyclic, the winner can be computed in polynomial time for any values of $j$ and $k$; for example, the Kemeny voting rule coincides with the Borda count in this scenario. 

Several ideas developed here (but not those related to complexity) first appeared in work by \citet{DPZ}.  In the concluding Section \ref{Conclusions} we touch on the important, behind-the-scenes role played by notions of generalized scoring rule (implicit in \citealp{Myerson}, and explicit in \citealp{Zwicker}; \citealp{CRX};  \citealp{XC}; \citealp{Z2018}), and on the distinction between the median procedure and the alternative generalization of Kemeny's voting rule used in this paper. Hudry has written several papers considering complexity issues for special cases of the median procedure, including the case of aggregating weak orders (see \citealp{Hudry} and comments in Section \ref{Conclusions}).  

\begin{table}
\begin{centering}
\begin{tabular}{|l|l|l|}
\hline 
\bf{Special case} & \bf{Equivalence to known rule} & \bf{Complexity} \tabularnewline
\hline 
\hline 
$(2,2)$-Kemeny &  Mean Rule & $\mathit{P}$ \tabularnewline
\hline 
$(2,2_1)$-Kemeny & Approval Voting, with winner as outcome & $\mathit{P}$ \tabularnewline

\hline 
$(2,\mathcal{L})$-Kemeny & Approval Voting, with  ranking as outcome & $\mathit{P}$ \tabularnewline

\hline 
$(2_1,2_1)$-Kemeny & Plurality Voting, with  winner as outcome & $\mathit{P}$ \tabularnewline

\hline 
$(2_1,\mathcal{L})$-Kemeny & Plurality Voting, with  ranking as outcome & $\mathit{P}$ \tabularnewline

\hline 
$(\mathcal{L}, 2)$-Kemeny & Borda Mean Rule & $\mathit{P}$ \tabularnewline

\hline 
$(\mathcal{L}, 2_1)$-Kemeny & Borda Voting, with winner as outcome & $\mathit{P}$\tabularnewline

\hline 
$(3, 3)$-Kemeny & (new rule: aggregates trichotomous weak orders) & $\mathit{NP}$-hard \tabularnewline

\hline 
$(\mathcal{L}, \mathcal{L})$-Kemeny & Kemeny Voting, with ranking as outcome & $\mathit{NP}$-hard \tabularnewline

\hline 

\end{tabular}\protect\caption{Special cases of $(j,k)$-Kemeny winner determination.  In an abuse of notation, we allow $j$ or $k$ to take on the value $2_1$ (meaning that the inputs or outputs are \emph{univalent} orders -- dichotomous weak orders whose top set is a singleton) or $\mathcal{L}$ (meaning that  inputs or outputs are \emph{linear} orders).
For voting rules that assign scores to candidates, such as approval voting, the outcome of the election can be expressed in the form of a winner (candidate with highest score) or in the form of a ranking (in decreasing order of score). The Mean Rule outcome ranks all alternatives with above average approval score over all those with below average score and the Borda Mean Rule acts similarly, but with Borda scores replacing approval scores (for details see \citealp{DPZ}; see \citealp{BP} for a recent axiomatic characterization of the Borda Mean Rule). }

\par\end{centering}

\end{table}

\section{Preliminaries} \label{Preliminaries}

A \emph{tournament} $(V, E)$ is a complete, antisymmetric directed graph, wherein $V$ is a finite set of \emph{vertices} and $E\subseteq V\times V$ is a set of directed edges (or \emph{arcs}) satisfying that for each $x,y\in V$ with $x \neq y$ exactly one of the pairs $(x,y)$ or $(y,x)$ belongs to $E$; we depict $(x,y)\in E$ as an arrow $x\rightarrow y$. For the remainder of the paper we will put arrows over symbols for tournaments and their components, to distinguish them from related but distinct objects associated with undirected graphs; for example, in Section \ref{MaxCutSection} the vertex sets $V$ and $\overrightarrow{V}$ are related but different. A \emph{weighted tournament} $\overrightarrow{\mathcal{H}}=(\overrightarrow{V}, \overrightarrow{E},\overrightarrow{w})$ is equipped with an \emph{edge-weight assignment} $\overrightarrow{w}$ of numerical weights to the edges, which can be interpreted as the \emph{flow} of some substance through the channels of a network.   The \emph{antisymmetric extension} of $\overrightarrow{w}$ assigns the value $\overrightarrow{w}^\emph{ext} (b,a) = -\overrightarrow{w}(a,b)$ for each $(b,a)\notin \overrightarrow{E}$.  For example, a $5$ on the $a \rightarrow b$ edge could indicate a flow of $5$ amps of electricity in a wire from $a$ to $b$ (or, as we discuss next, a flow of \emph{net preference} arising from a \emph{profile} of weak order relations).    A weight of \emph{negative} $5$ on $a \rightarrow b$ would tell us that $5$ units are flowing in the opposite direction, from $b$ to $a$, with $\overrightarrow{w}^\emph{ext}(b,a) = +5$.  Thus, we view the initial edge orientation as an arbitrary choice that serves bookkeeping purposes and does not indicate the actual flow direction.  In practice, we will be abusing notation by omitting the \emph{ext} superscript, conflating $\overrightarrow{w}$ with its antisymmetric extension through the rest of the paper.\footnote{\label{TwoEdges}Alternately, one might obviate any need for extending $\overrightarrow{w}$ by including both edges $a \rightarrow b$ and $b \rightarrow a$ to start with; for a weighted tournament one would then demand that the weight assignment $\overrightarrow{w}$ be an antisymmetric function.  We don't take that approach here, as it somewhat complicates some later arguments.  However, a planned sequel \citep{Z2018} requires \emph{weighted bi-tournaments} in which the antisymmetry requirement is dropped, allowing for the decomposition of $\overrightarrow{w}$ into (orthogonal) antisymmetric and symmetric components; the symmetric component can then be interpreted as a weight assignment to undirected edges. 
}   

A \emph{weak order} relation $\geq$ on $\overrightarrow{V}$ is a binary relation that is transitive, reflexive, and complete (total); if it is additionally antisymmetric then it is a \emph{linear order} relation.
The induced equivalence relation $\sim$ is defined by $a \sim b$ when $a \geq b$ and $b \geq a$ both hold. The most common interpretation of $a\geq b$, as an expression of (weak) preference for $a$ over $b$, has shaped some of the accompanying terminology; for example, under this interpretation $a \sim b$ expresses indifference, and so equivalence classes under $ \sim $ have come to be known as \emph{indifference classes}.   We will employ such more-or-less standard terms, but the reader should not allow this terminology to discourage interpretations other than preference.  

The indifference classes partition $\overrightarrow{V}$, and we can identify a weak order $\geq$ with a linear ordering of its indifference classes.  Thus, a \emph{dichotomous weak order}, with exactly two indifference classes, can be expressed in the form $\{T > B\}$, where $T$ is the top class and $B$ is the bottom, while a \emph{trichotomous weak order} can be written as $\{T >M> B\}$ (which, as a set of ordered pairs, is $\{ (x,y) \in \overrightarrow{V} \times \overrightarrow{V} \; |\;  x \in T$, or $y\in B$, or $(x,y) \in M \times M\}$.  More generally, we will say that a weak order is $j$\emph{-chotomous} if it has exactly $j$ indifference classes. The \emph{univalent} orders are the dichotomous weak orders of form  $\{ \{ x \}  > \overrightarrow{V}\setminus \{ x \} \}$, wherein the top set $T$ is a singleton.

Preference is not the only natural interpretation supported by the aggregation rules we have in mind. For example, a high school teacher might use a dichotomous weak order $\{T > B\}$ to recommend that some of her former pupils -- those she places in $T$ --  be put in a more advanced mathematics class next year, while the others would benefit from more review.  Perhaps several former teachers similarly weigh in with placement recommendations for the same set $\overrightarrow{V}$ of students.  In this case, we obtain a \emph{profile} $\Pi = \{\geq_i\}_{i \in N}$ of dichotomous weak orders, where $N$ is the set of teachers.  

More generally we are interested in aggregating a profile $\Pi$ of $j$-chotomous weak orders on a set $\overrightarrow{V}$.    
Such a profile induces a weighted tournament $\overrightarrow{\mathcal{H}}_\Pi$ on $\overrightarrow{V}$, in which the weight 

\[\overrightarrow{w}(a,b) = |\{ i \in N \; | \; a \geq _i b\}| - |\{ i \in N \; | \; b \geq _i a\} |  \]

\noindent on the $a \rightarrow b$ edge represents the numerical margin by which agents $i$ in $N$ with $a \geq  _i b$ outnumber those with $b \geq _i a$.\footnote{Which, if negative, amounts to a positive margin for $b$ over $a$.}   
We will refer to the resulting weighted tournament as the \emph{flow of net preference}.  This is another term arising from the preference interpretation -- specifically, the context in which agents are voters in an election and orders in the profile are ballots in the form of linear preference rankings of the candidates, so that $\overrightarrow{w}(a,b)$ represents the signed margin by which the majority express a preference for candidate $a$ over $b$.

This weighted tournament $\overrightarrow{\mathcal{H}}_\Pi$ provides only partial information about the original profile $\Pi$, and our analysis here applies only to aggregation rules for which that information suffices to compute the aggregated binary relation.\footnote{For example, this information is insufficient to compute election outcomes under \emph{single transferable vote} (aka \emph{instant run-off} or \emph{Hare} voting).}  In the voting context, with linear order ballots, these are known as the ``C2'' rules according to the classification by \citet{F}. Nonetheless, our results apply to some rules -- such as plurality voting -- known to be non-C2, as our profiles are not limited to linear orders.  For example, given a profile of univalent orders $\{ \{ x_i \}  > \overrightarrow{V}\setminus \{ x_i \} \}$, each representing voter $i$'s plurality ballot for candidate $x_i$, the induced weighted tournament \emph{does} encode all differences in plurality scores among the candidates.  

\begin{definition}\label{PairMajRel}
\emph{[Pairwise majority relation]} 
Let $\Pi$ be a profile of weak orders on $\overrightarrow{V}$ and $\overrightarrow{\mathcal{H}}_\Pi =(\overrightarrow{V},\overrightarrow{E}, \overrightarrow{w})$ be the induced weighted tournament.  For $a,b \in \overrightarrow{V}$, we write $a >^\mu b$ if $\overrightarrow{w}(a,b) > 0,$ $a \geq ^\mu b$ if $\overrightarrow{w}(a,b) \geq 0,$ and $a \sim ^\mu b$ if $\overrightarrow{w}(a,b) = 0.$ 
\end{definition}

\noindent  When $\Pi$ is a profile of linear orders, $a >^\mu b$ tells us that a majority of the individual orders in the profile have $a >_i b$, but the term ``pairwise majority relation'' is something of a misnomer in the broader context of weak orders.\footnote{
For example if $\Pi$ is a profile of $100$ weak orders, one with $a >_i b$ and the remaining $99$ with $a \sim_i b$, then $a >^\mu b$ holds. No majority has $a >_i b$, while there exist majorities satisfying \emph{both} $a \geq _i b$ and $b \geq_i a$.
}  

We'll refer to a sequence 
\begin{equation} \label{cycle}
a_1 >^\mu a_2 >^\mu \dots >^\mu a_k >^\mu a_1
\end{equation}
\noindent as a \emph{majority cycle} or \emph{Condorcet cycle}.  Such cycles are ruled out if $>^\mu$ is \emph{transitive}, satisfying  $a >^\mu b >^\mu c \Rightarrow a >^\mu c$ for all $a,b,c,\in \overrightarrow{V}$.  In general, \emph{acyclicity} of $>^\mu$ is a strictly weaker property (of the underlying weighted tournament or profile) than is transitivity of $>^\mu$, or the even stronger property of transitivity of $\geq^\mu$. However, under the conditions that prevail in Section \ref{MaxCutSection}, all ordinal forms of transitivity and acyclicity are equivalent.\footnote{Without   such conditions, the absence of cycles like that of line (\ref{cycle}) does not preclude weaker forms such as $a_1 >^\mu a_2  >^\mu a_3 \sim^\mu a_1$, which violate transitivity of $>^\mu$.  There exist multiple inequivalent (ordinal) forms of acyclicity and of transitivity.  However, when $\geq ^\mu$ is antisymmetric, or when $\overrightarrow{w}$ is purely cyclic, these distinctions vanish -- and one or the other of these conditions hold for the weighted tournaments constructed in the reductions used to prove Theorems \ref{Thm1}.1 and \ref{Thm1}.2.}

\section{The Orthogonal Decomposition of a Weighted Tournament} \label{Decomp}

A weighted tournament $\overrightarrow{\mathcal{H}}$ that assigns weight $1$ to each of the edges in a cycle\footnote{Recall that we have extended $\overrightarrow{w}$ to assign weights to edges in both directions.  Thus it may be that the original (unextended) edge-weighting assigned \emph{negative} $1$ to some edge $a_j \leftarrow a_{j+1}$ initially oriented oppositely to the sense of the cycle, but in this case the extended version indeed assigns weight $+1$ to the $a_j \rightarrow a_{j+1}$ edge.}
\[
a_1 \rightarrow a_2 \rightarrow \dots \rightarrow a_k \rightarrow a_1
\]
\noindent and weight $0$ to all other edges, is called a \emph{basic cycle} (or \emph{loop current} in electrical engineering speak); the \emph{basic cocycle}\footnote{Also called coboundary (see \citealp{Harary}).} (or \emph{source}) $a^\star$ \emph{at vertex} $a$ assigns weight $1$ to each edge  $a \rightarrow x$ from $a$ to another vertex, weight \emph{negative} $1$ to each edge  $a \leftarrow x$, and weight $0$ to each edge not incident to $a$.  Figure 1 shows examples for a tournament on four vertices.

\begin{figure}[!ht]
{\centering
\includegraphics[height=32mm,width=80mm] {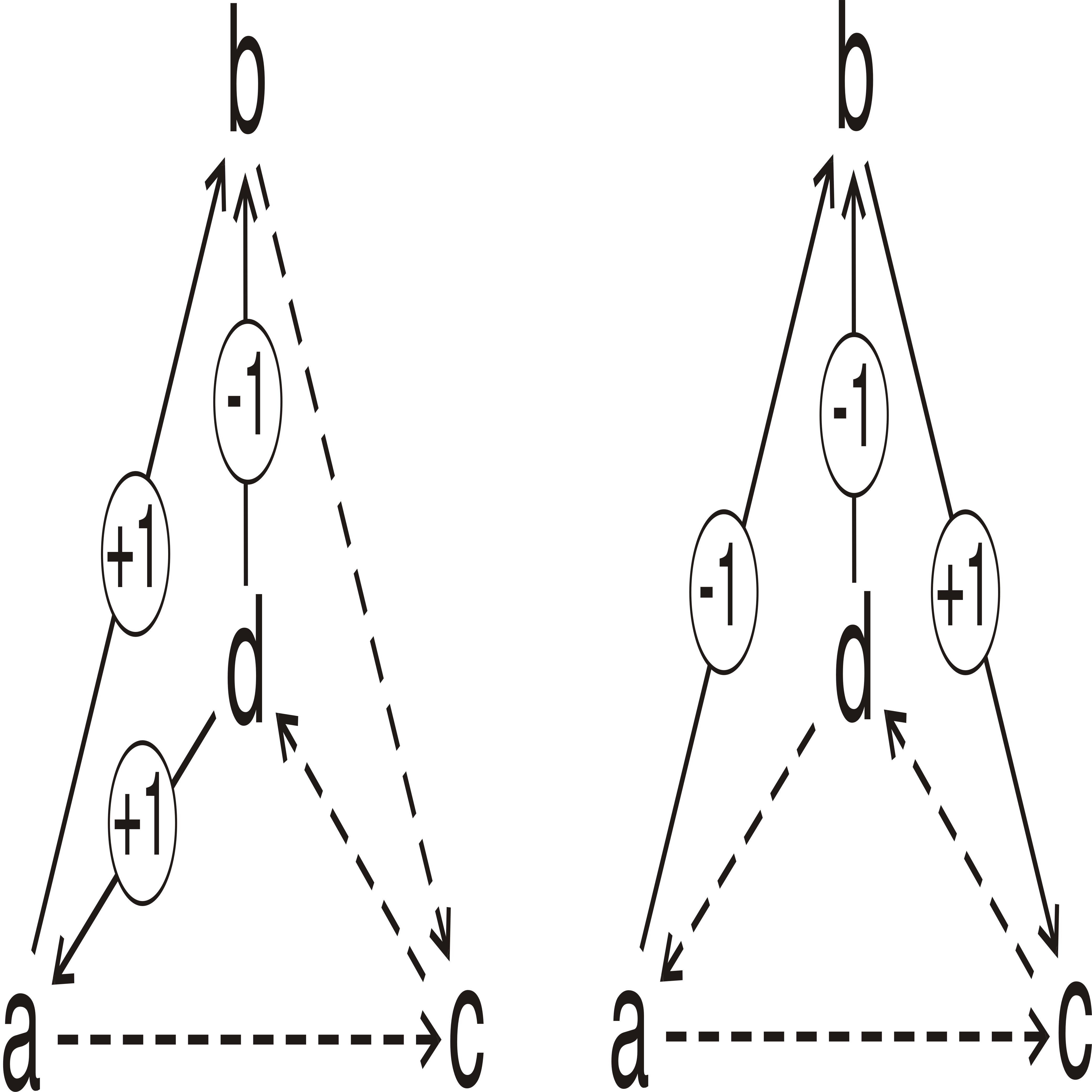}
\caption{A basic cycle (on the left) and the basic cocycle $b^\star$ (on the right).  A dashed edge indicates an edge-weight of zero.}\label{FigZero}}
\end{figure}

Any fixed enumeration of the set $ \overrightarrow{E}$ of edges of a tournament will identify each edge-weight assignment $\overrightarrow{w}$ with a point\footnote{The $i^{th}$ coordinate of $\overrightarrow{w}$ is just the weight assigned, by $\overrightarrow{w}$, to the $i^{th}$ edge in the enumeration.}  in the vector space $\mathbb{R}^{\frac{m(m-1)}{2}}$ (where $m = |\overrightarrow{V}|$) endowed with the standard inner product (for $\mathbf{u} = (u_1 ,u_2, \dots),$ $\mathbf{v} = (v_1 ,v_2, \dots)$, $\mathbf{u} \cdot \mathbf{v} = u_1 v_1 + u_2 v_2 + \dots$ -- equivalently, multiply weights on corresponding edges, then add these products\footnote{This requires that the underlying tournaments supporting $\mathbf{u} $ and $\mathbf{v}$ have matching edge orientations (which is why we cannot freely reverse the orientation of an edge whose weight happens to be negative).}). With this identification, we obtain subspaces $\mathbf{V}\!_{\emph{cycle}}$ and $\mathbf{V}\!_{\emph{cocycle}}$ as the linear spans, respectively, of all basic cycles and of all basic cocycles.

For any tournament one can show that these two subspaces 
are orthogonal complements in  $\mathbb{R}^{\frac{m(m-1)}{2}}$.
  Thus each vector $\overrightarrow{w}$ $\in \mathbb{R}^{\frac{m(m-1)}{2}}$ has a unique decomposition  $\overrightarrow{w} = \overrightarrow{w}_{\emph{cycle}} + \overrightarrow{w}_{\emph{cocycle}}$
 as a sum\footnote{The electrical engineer would say that every flow of a current in a circuit can be written uniquely as a superposition of loop currents added to a superposition of sinks and sources.} in which $\overrightarrow{w}_{\emph{cycle}} \in \mathbf{V}\!_{\emph{cycle}}$, $\overrightarrow{w}_{\emph{cocycle}} \in \mathbf{V}\!_{\emph{cocycle}}$, with $\overrightarrow{w}_{\emph{cycle}} \cdot \overrightarrow{w}_{\emph{cocycle}} = 0$. See Examples 1 and 2 in this section;  \citet{DPZ} provide detailed proofs and more examples.
We say that $\overrightarrow{w}$ is \emph{purely cyclic} if $\overrightarrow{w} = \overrightarrow{w}_{\emph{cycle}}$ (with $\overrightarrow{w}_{\emph{cocycle}}= \mathbf{0}$); $\overrightarrow{w}$  is \emph{purely acyclic} if $\overrightarrow{w} = \overrightarrow{w}_{\emph{cocycle}}$ (with $\overrightarrow{w}_{\emph{cycle}}= \mathbf{0}$).  A purely cyclic $\overrightarrow{w}$ \emph{always} has majority cycles (unless it is the $\mathbf{0}$ vector), while a purely acyclic $\overrightarrow{w}$ never has any -- in fact, \emph{purely acyclic} weighted tournaments satisfy a strong, quantitative form of transitivity (see Definition \ref{QTDef}).   

\begin{example}
Consider a tournament on three vertices, with edges initially oriented cyclically: $a_1 \rightarrow a_2 \rightarrow a_3 \rightarrow a_1$. The vector space of all edge-weight assignments is then $\mathbb{R}^3$; the cyclic subspace $\mathbf{V}\!_{\emph{cycle}}$ consists of all $\overrightarrow{w}$ that assign equal weights to these three edges, forming a one-dimensional line   $L$ in $\mathbb{R}^3$ (passing through the points $(0,0,0)$ and $(1,1,1)$); and the cocyclic subspace $\mathbf{V}\!_{\emph{cocycle}}$ consists of all $\overrightarrow{w}$ that assign weights to these three edges that sum to $0$, forming a two-dimensional plane perpendicular to $L$.
\end{example}

Suppose the weight function $\overrightarrow{w}$ is of a mix of both components, with neither being zero.  In this case, the weighted tournament may or may not exhibit a majority cycle -- the situation depends on which component predominates in the sum.  If the sign of $ \overrightarrow{w}_{\emph{cycle}}+ \overrightarrow{w}_{\emph{cocycle}}$ agrees with that of $\overrightarrow{w}_{\emph{cycle}}$ on enough edges, then  $\overrightarrow{w}$ will exhibit a cycle, which we will refer to as ``overt."   If not, then stripping out the cocyclic component will necessarily reveal a cycle, which had been masked or ``hidden" by the cocyclic component.  
Thus, having no such hidden cycles is equivalent to pure acyclicity, and this condition is strictly stronger than having no overt cycles; for more on ``hidden" versus ``overt" cycles see Observation \ref{DecompObs} and footnote \ref{QTHC}.

In electrical engineering, this decomposition serves as the mathematical foundation of \emph{Kirchoff's Laws} \citep{Kirchoff} of circuit theory, but its roots lie in one-dimensional homology theory, within algebraic topology (see \citealp{Croom}; \citealp{MacWilliams};   \citealp{Harary}).  In particular, the orthogonal projection of $\overrightarrow{w}$ onto $\mathbf{V}\!_{\emph{cocycle}}$ (which yields $\overrightarrow{w}_{\emph{cocycle}}$) coincides with the \emph{boundary map} of homology.  
The decomposition was first applied to the flow of net preference by H. P. \citet{Young} in his characterization of Borda's voting rule, and was later exploited by \citet{Z} and by \citet{Saari}.  Quite recently, it was used by \citet{DPZ} to characterize the \emph{Mean Rule} for the aggregation of dichotomous weak orders, and by \citet{BBS} to characterize \emph{maximal lotteries}.  The relevance of the decomposition to profiles of weak or strict preference ballots can be appreciated from the following observation:

\begin{observation} \label{DecompObs} In the decomposition $\overrightarrow{w} = \overrightarrow{w}_{\emph{cycle}} + \overrightarrow{w}_{\emph{cocycle}}$ of any weight function $\overrightarrow{w}$ for a weighted tournament:
\begin{enumerate}
\item The cocyclic component $\overrightarrow{w}_{\emph{cocycle}}$ of $\overrightarrow{w}$ assigns to each edge $a \rightarrow b$ the scaled difference $\frac{a^\beta - b^\beta}{|\overrightarrow{V}|}$ in symmetric Borda scores\footnote{\label{SymBorda}Standard Borda scoring awards a weight of $m-1$ to the candidate ranked first on a (linearly ordered) ballot, $m-2$ to the candidate ranked second \dots and $0$ to the candidate ranked $m^\emph{th}$ (last).  The symmetric version uses weights of $m-1$, $m-3$, $m-5$, \dots , $-(m-3)$, and $-(m-1)$ (for last-ranked); equivalently, to find the weight awarded to $x$ by a ballot, subtract the number of alternatives ranked strictly over $x$ from the number ranked strictly under.  This equivalent version extends Borda scoring weights to weak orders.  The score of a candidate is the sum of weights awarded by all ballots, with candidates then ranked in order of decreasing score (and the Borda winner determined by highest score).  While standard and symmetric weights produce different scores, they induce the same final ranking.}
of the two alternatives; we might say that ``the Borda count is the boundary map.''
\item Thus, for a purely acyclic edge-weighting the pairwise majority relation yields a transitive ranking, which agrees with the ranking induced by Borda scores.  
 \item More generally, these two rankings agree when the cocyclic component is dominant in the sense that $\overrightarrow{w}(a,b)$ always agrees in sign with $\overrightarrow{w}_{\emph{cocycle}}(a,b)$.  However, when the cyclic component becomes large enough to reverse a few edge weight signs, but not so many as to introduce Condorcet cycles (``overt'' majority cycles), these rankings differ.  That difference can be attributed to the ``hidden'' cycles of the cyclic component.
\end{enumerate}
\end{observation}

\begin{example}
Consider the following profile $\mathcal{T}_{28}$ with $28$ trichotomous weak order inputs on the four-element set $\{a,b,c,d\}$:
\begin{itemize}

\item 16 copies of $\{ a\} > \{ b,c\} > \{ d\}$,

\item 8 copies of $\{ b\} > \{ c\} > \{ a,d\}$,

\item 4 copies of $\{ b,c\} > \{ d\} > \{ a\}$.

\end{itemize}

\noindent  The induced weighted tournament is on the left of Figure 2; its weight function $\overrightarrow{w}$ is a vector in $\Re ^6$.  To see why the weight on the $d \rightarrow a$ edge  is $-12$ (for example), note that the first $16$ orders have $a > d$, the next $8$ have $a \sim d$ and the last $4$ have $d > a$, for a resulting ``net preference for $d$ over $a$" of $-16 + 0 + 4 = -12$.

Symmetric Borda scores for $\mathcal{T}_{28}$ are $a^\beta = 20$, $b^\beta = 32$, $c^\beta = 16$, and $d^\beta = -68$ (see footnote \ref{SymBorda}).  As $|\overrightarrow{V}|=4$, $\frac{a^\beta - b^\beta}{|\overrightarrow{V}|} = -3 = \overrightarrow{w}_{\emph{cocycle}}(a,b)$, whence (by Observation \ref{DecompObs}.1) the $a \rightarrow b$ edge -- for example -- of the middle tournament is labeled  $-3$ in Figure 2.  Edge-weights for $\overrightarrow{w}_\emph{cycle}$ are then determined by $\overrightarrow{w}_\emph{cycle}(x,y) = \overrightarrow{w}(x,y) - \overrightarrow{w}_\emph{cocycle}(x,y)$.

The cocyclic component may now be written as the linear combination
       
\[
5a^\star + 8b^\star + 4c^\star - 17d^\star 
\]

\noindent of all four basic cocycles, in which the coefficient of $x^\star$ is $\frac{x^\beta}{|\overrightarrow{V}|} $; alternately one can use $a^\star + b^\star + c^\star + d^\star = \mathbf{0}$ to express $\overrightarrow{w}_\emph{cocycle}$ in terms of any three basic cocycles.  As shown in Figure 3, $\overrightarrow{w}_\emph{cycle}$ can be written as a linear combination

\[
7(a \rightarrow b \rightarrow d \rightarrow a) +4(b \rightarrow c \rightarrow d \rightarrow b)+3(a \rightarrow c \rightarrow d \rightarrow a)
\]

\noindent of three basic cycles.  The $-3$ on the $d\rightarrow b$ edge of $\overrightarrow{w}_\emph{cycle}$ can then be seen as the sum of $+4$ from $4(b \rightarrow c \rightarrow d \rightarrow b)$ and $-7$ from $7(a \rightarrow b \rightarrow d \rightarrow a)$ (negative because the sense of the cycle opposes that of the directed edge).

\end{example}

\begin{figure}[!ht]
{\centering
\includegraphics[height=32mm,width=95mm] {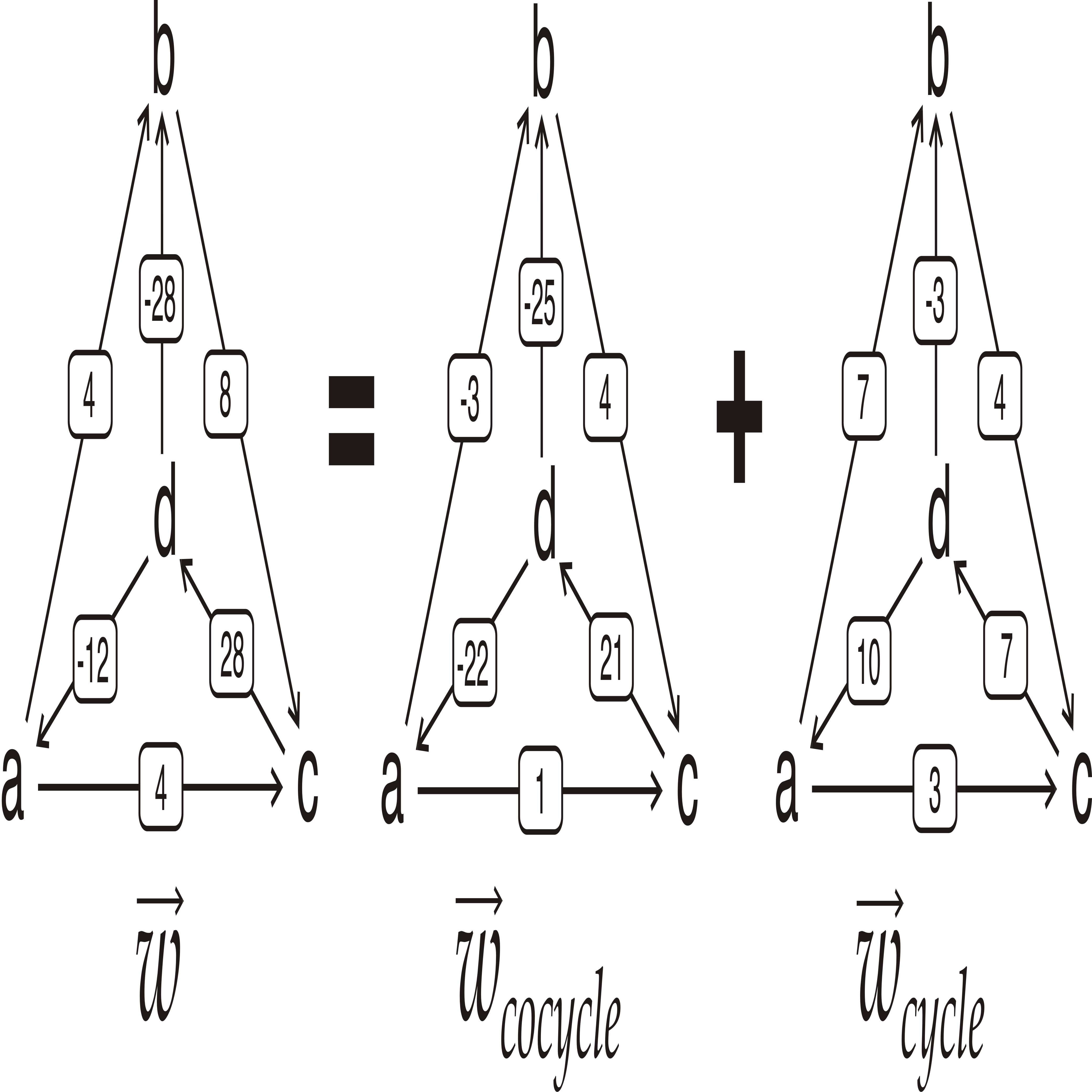}
\caption{The weighted tournament induced by profile $\mathcal{T}_{28}$, and its decomposition.}
\label{FigDecompA}}
\end{figure}

\begin{figure}[!ht]
{\centering
\includegraphics[height=32mm,width=68mm] {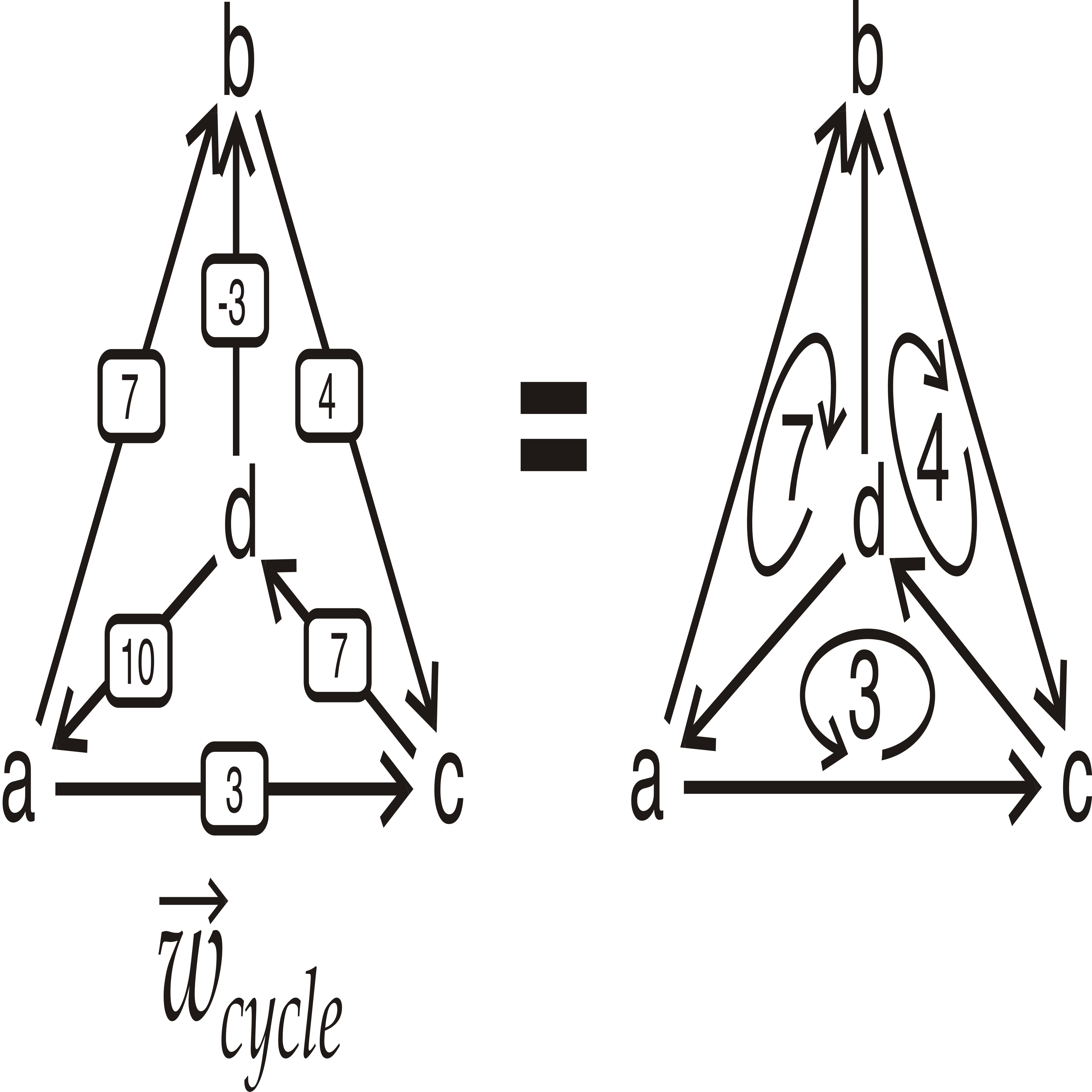}
\caption{The cyclic component from Figure 2 written as a linear combination of three basic cycles.}}
\end{figure}

\section{A Version of \emph{Max Cut} for Weighted Tournaments} \label{MaxCutSection}

In the well known \textit{max cut problem}, one starts with an undirected graph $\mathcal{G} = (V, E)$ with finite vertex set $V$.  A \emph{vertex cut} is a partition $\mathcal{P}= \{J, K\}$ of $V$ into two pieces, and is assigned a score $v(\mathcal{P})$ equal to the number of edges $\{a,b\}\in E$ whose vertices are ``cut'' by $\mathcal{P}$ (meaning $a \in J$ and $b \in K$, or $a \in K$ and $b \in J$).  The \emph{max cut} (decision) problem takes $\mathcal{G}$ along with a positive integer $s$ as input, and asks whether there exists a vertex cut of $\mathcal{G}$ with score at least $s$. It is one of the best known $\mathit{NP}$-complete problems. The corresponding optimization problem seeks a cut of maximal score; any such optimization problem inherits $\mathit{NP}$-hardness when the associated decision problem is  $\mathit{NP}$-hard.

For our purposes, a certain generalization will be useful.  A  \textit{vertex $k$-partition} $\mathcal{P}$ has $k$ nonempty pieces, with score equal to the number of edges $\{a,b\}\in E$ whose vertices belong to different pieces of $\mathcal{P}$.  The \emph{max $k$-cut problem} then asks whether there exists any $k$-partition meeting or exceeding some specified threshold score, with corresponding optimization problem formulated as one would expect. Our main concern will be with the weighted versions of these problems:  in a weighted graph $\mathcal{G} = (V, E,w)$ the function $w$ assigns an \emph{edge-weight} $w(e) \geq 0$ to each edge $e\in E$ and the \emph{score} is the sum of the weights assigned to edges that are cut.  The decision problem (or optimization problem) is similarly $\mathit{NP}$-complete (respectively, $\mathit{NP}$-hard).  For the weighted version there is no loss of generality in assuming $\mathcal{G}$ is complete; just add all the missing edges and assign them weight zero.

We consider a version of \emph{max cut} for weighted tournaments or directed graphs $\overrightarrow{\mathcal{H}} = (\overrightarrow{V}, \overrightarrow{E}, \overrightarrow{w})$ for which the real-number edge-weights may be negative,\footnote
{
For the directed problem, allowing negative weights adds no generality; if one reverses an edge while simultaneously reversing the sign of its weight, the effect on the \emph{max k-OP} problem (see Definition \ref{Def1}) is nil.  Negative weights provide the notational flexibility needed to express the decomposition $\overrightarrow{w}=\overrightarrow{w}_{\emph{cycle}} + \overrightarrow{w}_{\emph{cocycle}}$. 
} with a linearly ordered partition of the vertices playing the role of a ``cut."  For example, we might partition $\overrightarrow{V}$ into two disjoint and nonempty pieces, $T$ (for \emph{top}) and $B$ (for \emph{bottom}); the ordered partition $\overrightarrow{\mathcal{P}} = \{T>B\}$ is then equivalent to a dichotomous weak order $\succeq$ on $\overrightarrow{V}$. 
An ordered tripartition $ \{T>M>B\}$ similarly corresponds to a \emph{trichotomous} weak order on $\overrightarrow{V}$, 
while a \emph{linear} order on $V$ is equivalent to an ordered $|\overrightarrow{V}|$-partition, which has as many non-empty pieces as there are vertices, so that each piece is a singleton. 

Given a weighted tournament $\overrightarrow{\mathcal{H}} = (\overrightarrow{V}, \overrightarrow{E}, \overrightarrow{w})$ and an ordered $k$-partition $\overrightarrow{\mathcal{P}}$ corresponding to a $k$-chotomous weak order $\succeq$ on $\overrightarrow{V}$, we say that an edge $(x,y)$ goes \emph{down} if $x\succ y$, goes \emph{up} if $y \succ x$, and goes \emph{sideways} if $x \sim y$. For the example in Figure \ref{FigOne},  $(a,c)$, $(a,e)$ and $(d,f)$ go down; $(g,b)$, $(g,d)$ and $(c,b)$ go up; and $(a,b)$ and $(f,e)$ go sideways. The score $\overrightarrow{v}$ of an ordered $k$-partition $\overrightarrow{\mathcal{P}}$ is now defined from the original (not extended) $\overrightarrow{w}$ by:

\begin{equation} \label{E1}
 \overrightarrow{v}\!_ {\overrightarrow{w}}
  (\overrightarrow{\mathcal{P}} ) 
  = \hspace{-2 mm} 
 \sum_{(x,y) \in \overrightarrow{E}\;\emph{goes down}}\hspace{-9 mm}  \overrightarrow{w} (x,y) \hspace{5 mm} -  \hspace{-2.5 mm} 
 \sum_{(u,v)\in \overrightarrow{E}\;\emph{goes up}} \hspace{-7 mm}  \overrightarrow{w} (u,v). 
\end{equation}

\noindent Note that the score is independent of weights on sideways edges.  Thus in Figure \ref{FigOne} we have \begin{equation}
\overrightarrow{v}\!_{\overrightarrow{w}}(\overrightarrow{\mathcal{P}})=[3 + 4 + 5]-[1+3 +4]=4.
\end{equation}

\vspace{-2mm}

\begin{figure}[!ht]
{\centering
\includegraphics[height=44mm,width=60mm] {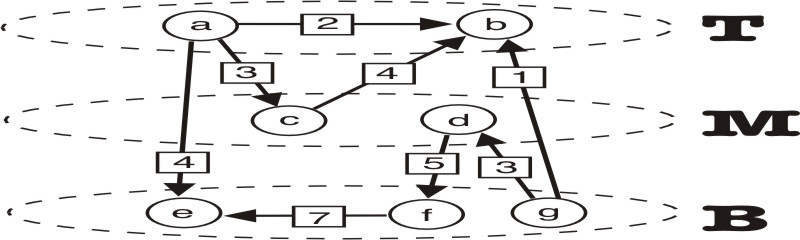}
\caption{A weighted directed graph and ordered $3$-partition}\label{FigOne}}
\end{figure}

\noindent If we instead interpret $\overrightarrow{w}$ as denoting the antisymmetric extension of the original weight function (see Section \ref{Preliminaries}),
equation (\ref{E1}) can be rewritten as:

\begin{equation}\label{E2}
 \overrightarrow{v}\!_{\overrightarrow{w}}(\overrightarrow{\mathcal{P}} ) = \sum_{x\, \succ \, y} \overrightarrow{w}(x,y). \end{equation}

\begin{definition}\label{Def1}
Max $k$-OP (``OP" for ``ordered partition") is the decision problem that takes as input a weighted tournament $\overrightarrow{\mathcal{H}}=(\overrightarrow{V}, \overrightarrow{E},\overrightarrow{w})$ for which $\overrightarrow{w}$ is \emph{integer}-valued, along with an integer $s$, and asks whether there exists an ordered $k$-partition with score at least $s$.  The corresponding optimization problem seeks an ordered partition achieving maximal score.
\end{definition}

Why propose this particular adaptation of \emph{max cut} for tournaments?   For one thing, \emph{max $k$-OP}  serves as the basis for a generalization (in Section \ref{jkSection}) of the Kemeny voting rule that yields a variety of aggregation rules -- rules already known to social choice and judgement aggregation -- as special cases. 
A second justification arises from mathematical naturality; solving \emph{max $k$-OP} is equivalent to finding a vector (from among those representing $k$-chotomous weak orders) that maximizes the inner product with a given, second vector (representing the weight function $\overrightarrow{w}$).  The equivalence will be discussed by \citet{Z2018}. 

Our goal is to show that both the purely cyclic and the transitive subcases of \emph{max $k$-OP} are $\mathit{NP}$-complete for $k \geq 3$, but polynomial-time solvable for $k=2$ and for arbitrary $k$ when $\overrightarrow{w}_\emph{cycle}=\mathbf{0}$.
A first hardness result reduces \emph{max }$3$-\emph{cut} to the purely cyclic sub-case of \emph{max }$3$-\emph{OP},  while a second construction reduces \emph{max }$2$-\emph{cut} to the transitive sub-case of \emph{max }$3$-\emph{OP}.  Either reduction alone suffices to show that \emph{max }$3$-\emph{OP} is $\mathit{NP}$-complete (as \emph{max }$3$-\emph{OP} is easily seen to lie in $\mathit{NP}$), but the two serve different roles.  The first establishes that intractability can arise from the cyclic component in isolation -- a mix of both components is not required.  However, it leaves open the possibility that overt majority cycles alone suffice for the analysis, with no role left for the orthogonal decomposition and its revelation of hidden cycles in explaining why some instances of \emph{max }$k$-\emph{OP} are in $P$ while others are $\mathit{NP}$-hard.\footnote{We are indebted to one of the referees of an earlier \emph{COMSOC} conference version of this paper, who pointed us to this issue.  The referee observed that if we assume transitivity of the majority preference relation, identifying the winning order for the standard Kemeny rule (for aggregating linear orders into a linear order) is computationally easy, and asked whether the same assumption might suffice to render the $(3,3)$-Kemeny winner problem easy.  The second reduction shows the answer to be ``no."
}  By showing that the transitive subcase remains $\mathit{NP}$-hard, the second reduction closes off this possibility and (in light of the first easiness result) establishes the essential role of hidden cycles in explaining the hardness boundary in \emph{max }$3$-\emph{OP}.

\begin{theorem}\label{Thm1}
Let $k \geq 3$.
\begin{enumerate}
\vspace{-2 mm}
\item The problem \emph{max $k$-OP} is $\mathit{NP}$-complete, even in the purely cyclic case.
\vspace{-2 mm}
\item The problem \emph{max $k$-OP} is $\mathit{NP}$-complete, even in the transitive case.
\end{enumerate}
\end{theorem}

Our first easiness result shows that the purely acyclic sub-case of \emph{max }$k$-\emph{OP} is solvable in polynomial time, for each $k \geq 2$.  The second result shows that for the special case $k = 2$, the cyclic component has no effect on the the solution to \emph{max }$k$-\emph{OP}; we may safely suppress it, at which point the first easiness result applies.

\begin{theorem}\label{Thm2}
The following problems are polynomial-time solvable:
\begin{enumerate}
\vspace{-2 mm}

\item the purely acyclic case of \emph{max $k$-OP} (for each fixed $k \geq 2$, and also with $k$ folded in to the input),
\vspace{-2 mm}

\item \emph{max $2$-OP}.
\end{enumerate}
\end{theorem}

\noindent In combination these two theorems argue that the two components have \emph{independent} effects on computational complexity.\footnote{What is the added value of \emph{orthogonality} to an orthogonal decomposition?  Consider the force of gravity on a block resting on an inclined plane,  decomposed into a first component of force normal to the plane and a second parallel to the plane.  It is precisely the orthogonality of these two components that guarantees that the first force has zero  tendency to make the block slide, while the second has zero tendency to make the block stick; the effects are independent because the components are orthogonal.}

If we attacked \emph{max $k$-OP} via brute force search over all ordered $k$-partitions (of the vertex set $\overrightarrow{V}$ of a weighted tournament), then for any fixed $k \geq 2$ we would find that the number of such partitions grows exponentially in the number $|\overrightarrow{V}|$ of vertices.  The key idea in the proof of Theorem \ref{Thm2}.1
is that this search space can be reduced to one of polynomial size $\mathcal{O}( |\overrightarrow{V}|^{k-1})$ for fixed $k$ and \emph{purely acyclic} $\overrightarrow{w} $; part 2 follows from part 1, once we show that the cyclic component may be ignored for the special case of ordered partitions into exactly two pieces. 
 
We turn now to the proof of Theorem \ref{Thm1}.1, beginning with a polynomial reduction of \emph{max $3$-cut} to \emph{max $3$-OP}.  The  idea is to replace a weighted graph $\mathcal{G}$ with a correspondingly weighted tournament  $\overrightarrow{\mathcal{H}_{\mathcal{G}}}$ in such a way that each tripartition $\mathcal{P}$ of $\mathcal{G}$'s vertices corresponds to an ordered tripartition $\overrightarrow{\mathcal{P}}$ of $\overrightarrow{\mathcal{H}_{\mathcal{G}}}$'s  vertices satisfying $v(\mathcal{P}) = \overrightarrow{v}(\overrightarrow{\mathcal{P}})$.  The $\overrightarrow{\mathcal{H}_{\mathcal{G}}}$ construction produces, for each edge $\{a,b\}$ of $\mathcal{G}$, two new vertices  (in addition to the original vertices of $\mathcal{G}$) and four edges.  More precisely:

\begin{definition}  
Let $\mathcal{G} = (V, E)$ be any complete (finite, undirected) graph and $w\!  : \! E \rightarrow \mathbb{R}$ be an associated nonnegative edge-weight function. The induced weighted tournament $\overrightarrow{\mathcal{H}_{\mathcal{G}}}$ is defined as follows:
\begin{itemize}
\item For each edge $e=\{a, b\} \in E$ of $\mathcal{G}$, construct two \emph{direction vertices} $d_{ab}$  and $d_{ba}$ of $\mathcal{H}_{\mathcal{G}}$.  Let $D = \{d_{ab}\; | \; \{a, b\} \in E\}$ denote the set of direction vertices and assume $D\cap V=\emptyset$.
\item $\overrightarrow{\mathcal{H}_{\mathcal{G}}}$'s vertex set is $\overrightarrow{V}=D \cup V$, with elements of $V$ referred to as \emph{ordinary vertices}.
\item Add all edges of form $a \rightarrow d_{ab}$ and $d_{ab} \rightarrow b$ to $\overrightarrow{\mathcal{H}_{\mathcal{G}}}$, with $\overrightarrow{w}$ assigning to each the original weight $w(\{ a, b\})$ of $\{a,b\}$ in $\mathcal{G}$; then add enough arbitrarily directed edges to make $\overrightarrow{\mathcal{H}_{\mathcal{G}}}$ a tournament, with $\overrightarrow{w}$ assigning weight $0$ to each of these.  
\end{itemize}
\end{definition}

\noindent Notice that each edge $e=\{a, b\}$ of $\mathcal{G}$ thus contributes an \emph{$\{a,b\}$  $4$\emph{-cycle}}  

\begin{equation}\label{4cycle}
 a \longrightarrow d_{ab} \longrightarrow b \longrightarrow d_{ba} \longrightarrow a  
\end{equation}

\noindent of edges in $\overrightarrow{\mathcal{H}_{\mathcal{G}}}$, each with weight $w(\{ a,b \})$.  In particular, $\overrightarrow{w}$ is \emph{purely cyclic}.  The combinatorial core of the Theorem 
 \ref{Thm1}.1 proof consists of the following:

\begin{lemma}\label{4cycleLemma} \emph{(Fitting a four-cycle into three levels)}
Let $\overrightarrow{\mathcal{P}}= \{T>M>B\}$ be any ordered tripartition of the vertex set $\overrightarrow{V}$ of $\overrightarrow{\mathcal{H}_{\mathcal{G}}}$.  Then for each weight $w$ edge $e=\{a, b\}$ of $\mathcal{G}$:  \begin{itemize}
\item if $a$ and $b$ belong to the same piece of $\overrightarrow{\mathcal{P}}$ then the net contribution to the score $\overrightarrow{v}(\overrightarrow{\mathcal{P}})$ made by the  edges of the $\{a,b\}$ $4$-cycle of line (\ref{4cycle}) is zero, and 
\item if $a$ and $b$ belong to any two different pieces of $\overrightarrow{\mathcal{P}}$ then, by appropriately reassigning the direction vertices $d_{ab}$ and $d_{ba}$ among $T$, $M$, and $B$, we can set the net contribution to $\overrightarrow{v}(\overrightarrow{\mathcal{P}})$ made by the edges of the $\{a,b\}$ $4$-cycle equal to $0$, or  $+ w$, or  $-w$, as we prefer.   
\end{itemize}
\end{lemma}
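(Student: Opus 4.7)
The plan is to encode the ordered tripartition $\overrightarrow{\mathcal{P}}=\{T>M>B\}$ numerically by assigning levels $2,1,0$ to $T,M,B$, so that by equation~(\ref{E2}) each arc $x \to y$ contributes $\overrightarrow{w}(x,y)\cdot\mathrm{sgn}(\mathrm{level}(x)-\mathrm{level}(y))$ to $\overrightarrow{v}(\overrightarrow{\mathcal{P}})$. Writing $\alpha=\mathrm{level}(a)$, $\beta=\mathrm{level}(b)$, $\delta=\mathrm{level}(d_{ab})$, and $\delta'=\mathrm{level}(d_{ba})$, the combined contribution of the four arcs of the $\{a,b\}$ $4$-cycle of line (\ref{4cycle}) becomes $w\cdot(S_1+S_2)$, where
\[
S_1 = \mathrm{sgn}(\alpha-\delta) + \mathrm{sgn}(\delta-\beta), \qquad S_2 = \mathrm{sgn}(\beta-\delta') + \mathrm{sgn}(\delta'-\alpha).
\]
The key structural observation is that $S_1$ depends only on $\delta$ and $S_2$ only on $\delta'$, so the placements of the two direction vertices can be tuned \emph{independently}.

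For the first bullet I would note that when $\alpha=\beta$, the two terms of $S_1$ read $\mathrm{sgn}(\alpha-\delta)$ and $\mathrm{sgn}(\delta-\alpha)$, which cancel for every choice of $\delta$; the same telescoping kills $S_2$. Hence the $4$-cycle contributes nothing to $\overrightarrow{v}(\overrightarrow{\mathcal{P}})$, no matter where $d_{ab}$ and $d_{ba}$ are placed.

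For the second bullet I would assume $\alpha>\beta$ by symmetry and run a short enumeration over the three admissible values of $\delta$. The outcome is $S_1 \in \{0,1\}$ when $\alpha-\beta=1$ and $S_1 \in \{1,2\}$ when $\alpha-\beta=2$; an analogous calculation gives $S_2 \in \{-1,0\}$ or $S_2 \in \{-2,-1\}$ accordingly. In either subcase the achievable sums $S_1+S_2$ visibly cover $\{-1,0,1\}$: pair two equal magnitudes to get $0$, and pair a smaller-magnitude $S_1$ with a larger-magnitude $S_2$ (or vice versa) to get $\pm 1$. Multiplying by $w$ delivers the claimed $\{0,w,-w\}$.

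The only real obstacle lies in the case analysis of the second bullet: one must verify, separately for ``$a,b$ in adjacent pieces'' and ``$a,b$ in the extreme pieces $T$ and $B$,'' that the somewhat narrower ranges of $S_1$ and $S_2$ still jointly cover all three targets $\{-1,0,1\}$. It is precisely at this point that the availability of three ordered levels (rather than two) is essential; the flexibility provided by the \emph{middle} level $M$ is what lets the $4$-cycle be tuned to any of the three values, and this is what makes the reduction from \emph{max-tricut} to \emph{max-}$3$\emph{OP} go through.
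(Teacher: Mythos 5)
Your proof is correct and is in substance the same argument as the paper's: an exhaustive check of how the four arcs of the $\{a,b\}$ $4$-cycle can go up, down, or sideways as $d_{ab}$ and $d_{ba}$ range over $T$, $M$, $B$ (the paper does this pictorially via Figure \ref{FigTwo} and a count of up versus down edges, leaving the remaining cases to ``a moment's thought''). Your sign-function encoding, and in particular the observation that $S_1$ depends only on $\delta$ and $S_2$ only on $\delta'$ so the two direction vertices can be tuned independently, makes the case analysis fully explicit and verifiable, which is a modest improvement in rigor rather than a different method.
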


\begin{proof} [Proof]  Figures \ref{FigTwo}L, \ref{FigTwo}C, and \ref{FigTwo}R (for \emph{Left, Center, Right}) show three possible ways to assign the four vertices $a, b, d_{ab}$ and $d_{ba}$ to membership in the three pieces of $\overrightarrow{\mathcal{P}}$.  In \ref{FigTwo}R ordinary vertices $a$ and $b$ belong to the same piece (here, piece $M$) of $\overrightarrow{\mathcal{P}}$.  Of the four edges in the $\{a,b\}$ 4-cycle, two are up edges and two are down edges; as each edge has weight $w$ the net contribution of these four edges is zero.  More generally, whenever $a,b \in M$ it is easy to see that the number of up edges from the $\{a,b\}$ 4-cycle must  equal the number of down edges, no matter where $d_{ab}$ and $d_{ba}$ are placed, and that this remains true in case $a,b \in T$ or $a,b \in B$.  Thus the net contribution is $0$ whenever the ordinary vertices $a$ and $b$ are in the same piece.

\begin{figure}[!ht]
{\centering
\includegraphics[height=34mm,width=140mm] {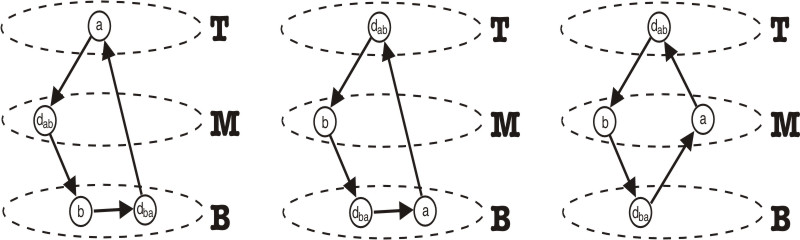}
\caption{Some possible ways to fit an $\{ a,b \}$ $4$-cycle into three levels.}\label{FigTwo}}
\end{figure}

In \ref{FigTwo}L and \ref{FigTwo}C, ordinary vertices $a$ and $b$ are in different pieces, and we have placed $d_{ab}$ and $d_{ba}$ so that there are two down edges and one up edge.  If each edge has weight $w$ then the net contribution of the four edges shown is $w$.  If we exchange the placements of $d_{ab}$ and $d_{ba}$ in \ref{FigTwo}L (or in \ref{FigTwo}C), we wind up with two up edges and one down edge for a net contribution of $-w$; if we move $d_{ab}$ and $d_{ba}$ into a common piece, then (as in the previous paragraph) the number of up edges will be equal to the number of down edges, for a net contribution of zero.  A moment's thought will convince the reader that for \emph{all} cases in which ordinary vertices $a$ and $b$ belong to different pieces, exactly four possibilities -- two up edges and one down, two down edges and one up, one down edge and one up, or two down edges and two up --    can be achieved by moving $d_{ab}$ and $d_{ba}$ around.  This completes the Lemma 1 proof.  \end{proof}

We are now ready to prove Theorem \ref{Thm1}, parts 1 and 2.

\begin{proof} [Proof of Theorem \ref{Thm1}.1]  Continuing, for the moment, with the special case $k=3$, it suffices to show that given an edge-weighted graph $\mathcal{G}$ and a positive integer $r$ the answer to the decision problem ``Does there exist a vertex tripartition $\mathcal{P}= \{J, K, L\}$ of $V$ with $v(\mathcal{P}) \geq r$?" is the same as the answer to  ``Does there exist an ordered tripartition $\overrightarrow{\mathcal{P}}$ of the vertex set $\overrightarrow{V}$ of $\overrightarrow{\mathcal{H}_{\mathcal{G}}}$ with $\overrightarrow{v}(\overrightarrow{\mathcal{P}}) \geq r$?"  

Lemma \ref{4cycleLemma} makes this easy.  Given a tripartition $\mathcal{P}= \{J, K, L\}$ of $V$ with $v(\mathcal{P}) = q \geq r$, arbitrarily order $\{J, K, L\}$ as $\{J> K>L\}$, which becomes the ordered partition of $\overrightarrow{\mathcal{H}_{\mathcal{G}}}$'s ordinary vertices.  For each weight $w$ edge $\{a,b\}$ of $\mathcal{G}$ cut by $\mathcal{P}$, add each direction vertex $d_{ab}$, $d_{ba}$ to one of the sets in $\{J> K>L\}$, so as to create two down edges and one up edge from the $\{a, b\}$ 4-cycle; for each original uncut edge $\{a,b\}$ of $\mathcal{G}$ add each vertex $d_{ab}$, $d_{ba}$ to one of the sets $\{J> K>L\}$  arbitrarily.  It is easy to see that the resulting $\overrightarrow{\mathcal{P}}$ achieves the exact same score: $\overrightarrow{v}(\overrightarrow{\mathcal{P}}) = v(\mathcal{P}) = q\geq r.$  

In the other direction, consider an ordered tripartition $\overrightarrow{\mathcal{P}}=\{V_1 \cup D_1 > V_2 \cup D_2 > V_3 \cup D_3\} $ of $\overrightarrow{V}$ with  $V_1 \cup V_2 \cup V_3 = V$,  $D_1 \cup D_2 \cup D_3 = D$, and $\overrightarrow{v}(\overrightarrow{\mathcal{P}}) \geq r$.  Let $\mathcal{P} = \{V_1, V_2, V_3\}$, a tripartition of $V$.  Each weight $w$ edge $\{a,b\}$ of $\mathcal{G}$ cut by $\mathcal{P}$ contributes $w$ to $v(\mathcal{P})$ and contributes $w$ or $0$ or $-w$ to $\overrightarrow{v}(\overrightarrow{\mathcal{P}})$.  Thus 
$v(\mathcal{P}) \geq \overrightarrow{v}(\overrightarrow{\mathcal{P}}) \geq r$, as desired.  
This completes our polynomial reduction of \emph{max $3$-cut} to \emph{max $3$-OP}.

Our argument generalizes easily to a polynomial reduction of \emph{max $k$-cut} to \emph{max $k$-OP}; in particular, the construction of $\overrightarrow{\mathcal{H}_{\mathcal{G}}}$ does not change at all, while the proof of Lemma 1 generalizes straightforwardly for ordered partitions into $k \geq 3$ pieces.

We will combine this reduction with the following straightforward reduction of \emph{max cut} to \emph{max $k$-cut}: to a weighted graph add $k-2$ new vertices, along with new edges joining them to each other and to all old vertices, and place very large weights on all these new edges.  A maximal-score partition $\mathcal{P}^\dagger$ of the enlarged weighted graph then consists of any maximal score partition $\mathcal{P}$ of the original vertices into two pieces, combined with $k-2$ other sets, each containing one new vertex.  Thus $\mathcal{P}$  may be read off from $\mathcal{P}^\dagger$. 

The combination reduces \emph{max cut} to (the purely cyclic case of)  \emph{max $k$-OP}, for each $k \geq 3$. As  \emph{max cut} is known to be $\mathit{NP}$-complete, while  \emph{max $k$-OP} (along with its purely cyclic sub-case) is clearly in class $\mathit{NP}$, Theorem \ref{Thm1}.1 now follows.  \end{proof}

\bigskip

We turn now to the proof of Theorem \ref{Thm1}.2, which reduces  \emph{max cut} to the transitive sub-case of  \emph{max $k$-OP}.
 \emph{Transitivity} here refers to the ordinary ``qualitative'' version, as expressed for a weighted tournament\footnote{We mean that for a weighted tournament induced by some profile of weak or strict rankings, this condition is equivalent to ordinary transitivity of the strict majority preference relation as defined at the end of Section \ref{Preliminaries}.  However, the construction that follows never assigns edge-weights of $0$, and in this setting all forms of transitivity of majority preference become equivalent.
} $\overrightarrow{\mathcal{H}}=(\overrightarrow{V}, \overrightarrow{E},\overrightarrow{w})$, by:

\begin{equation}  \label{OTX}
\emph{If }\overrightarrow{w}(x,y) >0\emph{ and } \overrightarrow{w}(y,z) >0\emph{ then }   \overrightarrow{w}(x,z) >0,
\end{equation}
\noindent for all $x, y \in V$ with $x \neq y$ (with  $\overrightarrow{w}$ denoting the \emph{antisymmetric extension} of Section \ref{Preliminaries}).

\begin{proof} [Proof of Theorem \ref{Thm1}.2]
Transitive  \emph{max $3$-OP} reduces to transitive  \emph{max $k$-OP} for any (finite) $k \geq 4$, as follows: add $k-3$ new vertices $D_1, \dots, D_{k-3}$ and put very large weights on all $x \rightarrow D_j$ edges for $x \in V$ and all $D_i \rightarrow D_j$ edges for $i < j$, so that any optimal ordered partition consists of an optimal ordered partition of the original vertices into three pieces, sitting above $k-3$ additional pieces, each of form $\{ D_j \}$. It suffices, then, to reduce  \emph{max cut} to transitive  \emph{max $3$-OP}.

As in the proof of Theorem \ref{Thm1}.1, we specify a polynomial translation that converts a weighted graph $\mathcal{G}=(V,E,w)$ containing $|V|$ vertices (which -- without loss of generality -- is complete and for which $w(e)$ is a nonnegative integer for 
each edge $e=\{ a, b\} \in E$) into a 
weighted tournament  $\overrightarrow{\mathcal{F}_{\mathcal{G}}}=(\overrightarrow{V}, \overrightarrow{E},\overrightarrow{w})$. This time our goals for $\overrightarrow{\mathcal{F}_{\mathcal{G}}} 
$ are a bit different. Let $C=1+\Sigma_{e \in E} w(e)$ and $\varepsilon = \frac{1}{72|V|^4}$.  Then our construction will satisfy that:
\begin{enumerate}
\item $\overrightarrow{w}$ is transitive in the sense of equation (\ref{OTX}), and
\item For each positive integer $r$ the answer to the decision problem ``Does there exist a vertex 
bipartition $\mathcal{P}= \{J, K\}$ of $V$ with $v(\mathcal{P}) \geq r$?" is the same as the answer 
to  ``Does there exist an ordered tripartition $\overrightarrow{\mathcal{P}}$ of the vertex set $
\overrightarrow{V}$ of $\overrightarrow{\mathcal{F}_{\mathcal{G}}}$ with \begin{equation}\label{rounddown}
\overrightarrow{v}
(\overrightarrow{\mathcal{P}}) \geq \langle \! \langle 3|V|C+r\rangle \! \rangle?"  
\end{equation}
\end{enumerate} 
\noindent Here $\langle \! \langle x\rangle \! \rangle $ rounds $x$ to the nearest integer. 
Given $\mathcal{G} = (V, E)$ and $w\! \! : \! E \rightarrow \Re$ as above, we now define the \emph{weighted tournament} $\overrightarrow{\mathcal{F}_{\mathcal{G}}}=(\overrightarrow{V}, \overrightarrow{E},\overrightarrow{w})$ \emph{induced by} $\mathcal{G}$ \emph{and} $w$ as follows:
\begin{itemize}
\item Choose any reference linear order $\triangleright$ of $\mathcal{G}$'s vertex set $V$.
\item Each vertex $a\in V$ contributes a quadruple of \emph{ordinary vertices} $a_1, a_2, a_3$, and $a_4$ to $\overrightarrow{V},$ along with three directed ``placement'' edges  (see Figure \ref{FigSix}):
\begin{equation}\label{4edges}
 a_1 \longrightarrow a_2 \longrightarrow a_3 \longrightarrow a_4 
\end{equation}
weighted as follows: $\overrightarrow{w}(a_2 , a_3) = 2C$ and $\overrightarrow{w}(a_1 , a_2) = C = \overrightarrow{w}(a_3 , a_4)$. 
  
\item Each edge $e=\{a, b\}$ of $\mathcal{G}$ having weight $w$  and satisfying $ a \,\triangleright \, b$  contributes two additional \emph{direction vertices} $d_{ab}$  and $d_{ba}$ to $\overrightarrow{V}$, along with four directed ``adjustment'' edges:
\begin{equation}\label{4moreedges}
 a_2 \longrightarrow d_{ab} \longrightarrow b_2\emph{; }\, \, b_3 \longrightarrow d_{ba} \longrightarrow a_3
\end{equation}
 with each such edge assigned weight $w$ (if $w>0$) or weight $\varepsilon$ (if $w=0$).
\item To make $\overrightarrow{\mathcal{F}_{\mathcal{G}}}$ into a tournament, we need to add edges between pairs of vertices that are not yet linked in either direction.  The direction of each can be chosen carefully (as detailed immediately below) in such a way that no cycles are created.  Assign weight $\varepsilon$ to each of these ``tiny'' edges.

\end{itemize}

Suppose we temporarily erase all weights on the edges of $\overrightarrow{\mathcal{F}_{\mathcal{G}}}$, and also omit all the tiny edges, retaining the placement and adjustment edges. The resulting unweighted digraph $\mathcal{E}_{\mathcal{G}}$ is acyclic 
(meaning there are no cycles that respect edge direction), thanks to the use of the reference linear order $\triangleright$; staring at Figure \ref{FigSix} can help explain what's going on here.  Any such acyclic digraph can be extended to an acyclic tournament by adding new edges one-at-a-time.\footnote{
It is easy to see that if adding edge $x \longrightarrow y$ would introduce a cycle, and adding $y \longrightarrow x$ would also do so, then there must have been a cycle in the original digraph.  The argument is essentially the same as that by \citet{HIP}, though the context there is somewhat different: alternating cycles in an undirected pregraph (a generalization of a graph wherein a vertex pair $\{a,b\}$ may be classified as an edge or a non-edge, or may be \emph{unclassified}).
}  Once these tiny edges are added and assigned positive weight $\varepsilon$, the resulting weighted tournament is transitive, as desired.  Moreover, $\varepsilon$ is small enough to guarantee that the total contribution of all weight $\varepsilon$ edges to the score of any ordered tripartition is less than $\frac{1}{2}$ in absolute value.  Thus we are safe when, for the remainder of this proof, we simultaneously ignore the presence of weight $\varepsilon$ edges and use of rounding in condition \ref{rounddown}.

\begin{figure}[!ht] 
{\centering
\includegraphics[height=35mm,width=68mm] {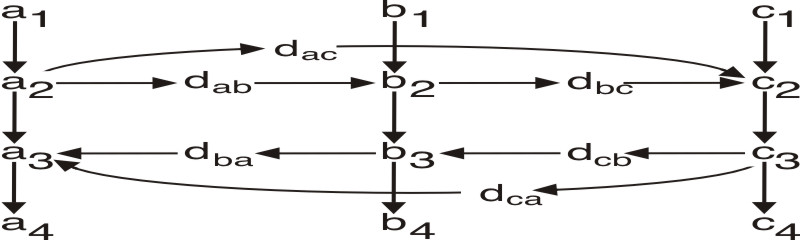}
\caption{Part of the digraph 
$\mathcal{E}_{\mathcal{G}}$
showing the (directed) placement and adjustment edges arising from three vertices 
$a \triangleright b \triangleright c$ 
(and corresponding undirected edges) of 
$\mathcal{G}$.  Note the absence of cycles.}
\label{FigSix}}
\end{figure}

Any score-maximizing tripartition $\overrightarrow{\mathcal{P}}=\{ T > M > B\}$ of $\overrightarrow{\mathcal{F}_{\mathcal{G}}}$ will also maximize that part of the score contributed 
by the placement edges (which have weight $2C$ or weight $C$), because the value of $C$ is large enough to 
make any nonzero contribution by a single weight-$C$ edge overwhelm the 
total contribution of all adjustment edges 
(which have weights from the original graph $
\mathcal{G}$).  To maximize the placement edge part of the score it is necessary, for each quadruple $a_1,  a_2, 
a_3,  a_4$ of ordinary vertices, either to place $a_1, a_2 \in T$, $a_3 \in M$, and $a_4 \in B$ -- in 
which case we will say that $\overrightarrow{\mathcal{P}}$ \emph {places $a$ up} -- or $a_1 \in T
$, $a_2 \in M$, and $a_3, a_4 \in B$ -- in which case $\overrightarrow{\mathcal{P}}$ \emph{places $a$ down}.  Either such  arrangement achieves a contribution of $3C$ from $a_1,  a_2, 
a_3,  a_4$, and it is easy to see that one cannot do better than that.  Hence, when seeking to maximize the score of $\overrightarrow{\mathcal{P}}$ we may assume that each original vertex of 
$\mathcal{G}$ is either placed up or down, for a total contribution of $3|V|C$ from placement edges. 

\begin{figure}[!ht]
{\centering
\includegraphics[height=45mm,width=100mm] {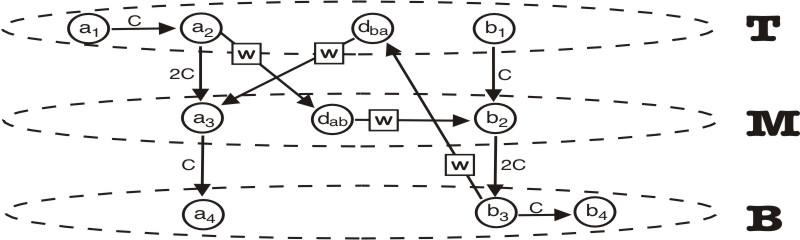}
\caption{Two quadruples (with $a$ up, $b$ down) and direction vertices (as positioned to achieve a net contribution of $w$ from the adjustment edges).}\label{FigFive}}
\end{figure}

Assume the weight of some edge $\{ a, b\}$ of $\mathcal{G}$ is $w$, and $a \triangleright b$.  Figure 5 shows the corresponding quadruples, placed in an ordered $3$-partition $\overrightarrow{\mathcal{P}}$ so that $a$ is up and $b$ down. The direction vertices $d_{ab}$ and $d_{ba}$ have been positioned so that the net contribution made to $\overrightarrow{v}(\overrightarrow{\mathcal{P}})$ by the four corresponding adjustment edges is $w+0-w+w = w$.  It is straightforward to check that one cannot achieve a contribution greater than $w$ by moving $d_{ab}$ and $d_{ba}$ into different pieces of $\overrightarrow{\mathcal{P}}$.  The situation is the same when $a$ is down and $b$ is up; one can position $d_{ab}$ and $d_{ba}$ to achieve a net contribution of $w$ from the adjustment edges, and no greater contribution is possible.  Finally, when $a$ and $b$ are either both up, or both down, the net contribution made by the four corresponding adjustment edges is $0$, regardless of where one positions $d_{ab}$ and $d_{ba}$.

Thus, given a partition $\mathcal{P} = \{ U, D\}$ of $\mathcal{G}$'s vertices with a score $v(\mathcal{P})$ of at least $r$ one can construct an ordered $3$-partition $\overrightarrow{\mathcal{P}}$ of $\overrightarrow{\mathcal{F}_{\mathcal{G}}}$'s vertices with score at least $3|V|C + r$
 by placing all vertices in $U$ up, all in $D$ down, and positioning the direction vertices to maximize the contribution of adjustment edges, as detailed in the previous paragraph.\footnote{
Recall that we are suppressing all mention of tiny edges and of $\varepsilon$ contributions to $\overrightarrow{v}(\overrightarrow{\mathcal{P}})$.
}
Conversely, given an ordered $3$-partition $\overrightarrow{\mathcal{P}}$ of $\overrightarrow{\mathcal{F}_{\mathcal{G}}}$'s vertices with score at least $3|V|C + r$, we know that each original vertex must be placed in the \emph{up} or \emph{down} position.  The considerations of the previous paragraph then imply that by setting $U = \{a \in V\, | \, a\emph{ is up}\}$ and  $V = \{b \in V\, | \, b\emph{ is down}\}$ we obtain a partition $\{ U, V\}$ of $\mathcal{G}$'s vertices with a score of at least $r$. \end{proof}

\medskip

The proof of Theorem \ref{Thm2} rests on a sequence of lemmas and definitions, including the  abstract definition of Borda score\footnote
{
In Section \ref{jkSection} we obtain a weighted tournament $\overrightarrow{\mathcal{H}}_\Pi = (A,E,\overrightarrow{w}_\Pi)$ from a profile $\Pi$ of weak (or linear) orders over a finite set $A$ of $m$ alternatives. The  score of a vertex $a \in A$ according to Definition \ref{BordaDef} coincides with the conventional notion of $a$'s Borda score based on $\Pi$, as calculated using the ``symmetric''  Borda scoring weights $m-1, m-3, \dots ,\ 3-m, 1-m$ of footnote \ref{SymBorda}. 
} 
as the ``net weighted out-degree" of a vertex $x$ in a weighted tournament.  \emph{In the following definitions $\overrightarrow{w}$ denotes the antisymmetric extension, as discussed in Section \ref{Preliminaries}.} 

\begin{definition} \label{BordaDef} Given a vertex $x$ of a weighted tournament $\overrightarrow{\mathcal{H}}=(\overrightarrow{V}, \overrightarrow{E},\overrightarrow{w})$, $x$'s Borda score is given by:
\begin{equation} \label{PlainBorda}
x^\beta = \sum_{ y\, \in V} \overrightarrow{w}(x,y)
\end{equation}
\end{definition}

\begin{definition} \label{QTDef}
A weighted tournament $\overrightarrow{\mathcal{H}}=(\overrightarrow{V}, \overrightarrow{E},\overrightarrow{w})$ satisfies \emph{exact quantitative transitivity} if 

\begin{equation}  \label{QT}
\overrightarrow{w}(x,y) + \overrightarrow{w}(y,z) = \overrightarrow{w}(x,z)
\end{equation}
holds for every three distinct vertices $x,y,z \in \overrightarrow{V}$.
\end{definition}

\begin{definition} \label{DiffGenDef} 
A weighted tournament $\overrightarrow{\mathcal{H}}=(\overrightarrow{V}, \overrightarrow{E},\overrightarrow{w})$ is \emph{difference generated} if there exists a function $\Gamma \! \! : \overrightarrow{V} \rightarrow \mathbb{R}$ such that

\begin{equation}  \label{DG}
\overrightarrow{w}(x,y) = \Gamma(x) -\Gamma(y)
\end{equation}
holds for every two distinct vertices $x,y \in \overrightarrow{V}$.  In this case, we can identify the vertices of $\overrightarrow{\mathcal{H}}$ with a sequence of real numbers
\begin{equation} \label{DGorder}
\gamma _1 \leq \gamma _2 \leq \dots \leq \gamma _m
\end{equation}
enumerating $\Gamma$'s values in non-decreasing order.
\end{definition}

We show next that pure acyclicity is equivalent to difference generated-ness as well as to quantitative transitivity:\footnote{\label {QTHC}In particular, just as transitivity rules out overt cycles, the stronger property of quantitative transitivity rules out \emph{all} cycles, hidden or overt.}
\begin{lemma} \label{charQT}
The following are equivalent for a weighted tournament $\overrightarrow{\mathcal{H}}=(\overrightarrow{V}, \overrightarrow{E},\overrightarrow{w})$:
\begin{enumerate}
\item $\overrightarrow{w}$ satisfies exact quantitative transitivity,
\item $\overrightarrow{w}$ is difference generated,
\item $\overrightarrow{w}$ is purely acyclic (equivalently, $w_{\emph{cycle}} =\mathbf{0}$ in the vector orthogonal decomposition $\overrightarrow{w} =\overrightarrow{w}_{\emph{cycle}}+\overrightarrow{w}_{\emph{cocycle}}$; equivalently,  $\overrightarrow{w} \in \mathbf{V}_{\emph{cocycle}}$, the cocycle subspace).
\end{enumerate}
\end{lemma}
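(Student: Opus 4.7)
The plan is to prove the equivalences cyclically as $(3) \Rightarrow (1) \Rightarrow (2) \Rightarrow (3)$, working throughout under the reversal convention so that antisymmetry $\overrightarrow{w}(x,y) = -\overrightarrow{w}(y,x)$ is always available regardless of which orientation was chosen for each arc in the fixed coordinate enumeration of $\overrightarrow{\mathcal{H}}$'s arcs. For $(3) \Rightarrow (1)$, I would observe that the basic $3$-cycle supported on $x \to y \to z \to x$ is itself a vector in $\mathbf{V}_{\emph{cycle}}$, so if $\overrightarrow{w} \in \mathbf{V}_{\emph{cocycle}}$ then its inner product with this basic cycle vanishes; under the reversal convention that inner product simplifies to $\overrightarrow{w}(x,y) + \overrightarrow{w}(y,z) + \overrightarrow{w}(z,x) = 0$, and applying antisymmetry to the last summand rearranges this into equation (\ref{QT}).

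For $(1) \Rightarrow (2)$, I would fix an arbitrary reference vertex $v_0 \in V$ and define $\Gamma(v_0) := 0$ together with $\Gamma(x) := \overrightarrow{w}(x, v_0)$ for $x \neq v_0$; specializing exact quantitative transitivity to the triple $x, y, v_0$ then gives $\overrightarrow{w}(x, y) = \overrightarrow{w}(x, v_0) - \overrightarrow{w}(y, v_0) = \Gamma(x) - \Gamma(y)$ for all distinct $x, y$ other than $v_0$, and the boundary cases involving $v_0$ drop out of the definition together with antisymmetry. For $(2) \Rightarrow (3)$, I would first note that each basic cocycle $B_a$ at vertex $a$ is itself difference-generated by the indicator function $\delta_a$ of the singleton $\{a\}$, since the recipe defining $B_a$ is exactly $B_a(x,y) = \delta_a(x) - \delta_a(y)$. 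Given any difference-generated $\overrightarrow{w}$ with generator $\Gamma$ I would then write
\[
\overrightarrow{w}(x,y) \;=\; \Gamma(x) - \Gamma(y) \;=\; \sum_{a \in V} \Gamma(a)\bigl(\delta_a(x) - \delta_a(y)\bigr) \;=\; \sum_{a \in V} \Gamma(a)\, B_a(x,y),
\]
which exhibits $\overrightarrow{w}$ explicitly as a linear combination of basic cocycles and hence as an element of $\mathbf{V}_{\emph{cocycle}}$.

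The main obstacle is not really deep: each of the three implications reduces to a one-line identity. The only genuine hazard I anticipate is purely bookkeeping, namely keeping the reversal convention straight so that the particular orientation chosen for each arc in the enumeration of $\Re^{m(m-1)/2}$ is not conflated with the sign of $\overrightarrow{w}$ or of a basic cycle/cocycle vector on that coordinate. Once this convention is fixed at the outset, every step above is a short algebraic verification.
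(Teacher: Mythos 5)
Your proof is correct, and it takes a genuinely different (and in places more self-contained) route than the paper. The paper leaves $(1)\Leftrightarrow(2)$ to the reader, proves $(3)\Rightarrow(2)$ by citing Observation 11.2 of \cite{DPZ} with the scaled Borda score $x\mapsto x^\beta/|V|$ as the explicit generator, and proves $(2)\Rightarrow(3)$ by a telescoping computation showing $\overrightarrow{w}\cdot\sigma=0$ for every basic cycle $\sigma$, then invoking the fact that $\mathbf{V}_{\emph{cycle}}$ and $\mathbf{V}_{\emph{cocycle}}$ are orthogonal complements to conclude $\overrightarrow{w}\in\mathbf{V}_{\emph{cocycle}}$. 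You instead close the cycle $(3)\Rightarrow(1)\Rightarrow(2)\Rightarrow(3)$ with three explicit one-line arguments: your $(2)\Rightarrow(3)$ exhibits $\overrightarrow{w}=\sum_{a}\Gamma(a)B_a$ directly as a linear combination of basic cocycles, which needs only the definition of $\mathbf{V}_{\emph{cocycle}}$ as a span and avoids the orthogonal-complement fact entirely; your $(3)\Rightarrow(1)$ replaces the external citation by pairing $\overrightarrow{w}$ against a basic $3$-cycle (this still uses the orthogonality of the two subspaces, which the paper asserts in Section \ref{Decomp}, so nothing is missing); and your reference-vertex construction for $(1)\Rightarrow(2)$ supplies the step the paper omits. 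The one thing the paper's route buys that yours does not is the identification of the scaled Borda scores as the canonical generator $\Gamma$, which the proof of Theorem \ref{Thm2} then uses to set up its search over monotone partitions; your generator $\Gamma(x)=\overrightarrow{w}(x,v_0)$ agrees with the Borda one only up to an additive constant, which is harmless for the lemma (and in fact also for the algorithm, since only differences and the induced ordering of the $\gamma_i$ matter) but is worth flagging if you want your lemma to feed directly into that later argument.
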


\vspace{-3mm}

\begin{proof} [Proof of Lemma \ref{charQT}] For  $(1) \Rightarrow (2)$, assume $\overrightarrow{w}$ satisfies exact quantitative transitivity.  Choose any $x_0 \in \overrightarrow{V}$, and assign an arbitrary real number (a discrete analogue of a constant of integration) as $\Gamma (x_0)$'s value.  For each $y \in \overrightarrow{V}$ set $\Gamma(y) = \overrightarrow{w}(y,x) + \Gamma (x_0)$. Then for  $y,z \in \overrightarrow{V},$ $   \overrightarrow{w}(y,z) = \overrightarrow{w}(y,x) + \overrightarrow{w}(x,z) =  \overrightarrow{w}(y,x) - \overrightarrow{w}(z,x) = 
[\hspace{0.2mm} \overrightarrow{w}(y,x) + \Gamma (x_0)] - [\hspace{0.2mm} \overrightarrow{w}(z,x) + \Gamma (x_0)] =  \Gamma (y) -  \Gamma (z)$.  Conversely, if $\overrightarrow{w}$ is difference generated via $\Gamma$ then $ \overrightarrow{w}(y,x) + \overrightarrow{w}(x,z) =[\hspace{0.2mm}\Gamma(y)-\Gamma(x)] + [\hspace{0.2mm} \Gamma(x)-\Gamma(z)] = \Gamma(y)-\Gamma(z) = \overrightarrow{w}(y,z)$, so $\overrightarrow{w}$ satisfies exact quantitative transitivity.

If $\overrightarrow{w}$ is purely acyclic then, as an immediate consequence of Observation 11.2   by \citet{DPZ}, $\overrightarrow{w}$ is difference generated via the function assigning scaled symmetric Borda scores:
\begin{equation} \label{ScaledBorda}
\Gamma \!: x \mapsto \frac{x^\beta}{|\overrightarrow{V}|}.
\end{equation}
\noindent Conversely, assume $\overrightarrow{w}$ is difference generated via $\Gamma$, and let $x_1,  x_2, \dots , x_r, x_1$ be any cycle of vertices.  The corresponding \textit{basic cycle} $\sigma$ is an edge-weighting of $\overrightarrow{\mathcal{H}}$ that assigns weight one to each edge $x_i \rightarrow x_{i+1}$ or $x_r \rightarrow x_{1}$ from the vertex cycle (under the reversal convention), and weight zero to each other edge.  Thus
\begin{equation}
\overrightarrow{w} \cdot \sigma = \big{[}\Gamma(x_1)-\Gamma(x_2)\big{]} + \big{[}\Gamma(x_2)-\Gamma(x_3)\big{]} + \dots + \big{[}\Gamma(x_{r-1})-\Gamma(x_r)\big{]} + \big{[}\Gamma(x_r)-\Gamma(x_1)\big{]} =0.
\end{equation}\label{dividers}It follows from linearity of the dot product that $\overrightarrow{w} \cdot \tau =0$ holds for any linear combination of basic cycles -- hence $\overrightarrow{w} \perp \mathbf{V}_{\emph{cycle}}$, and $\overrightarrow{w} \in \mathbf{V}_{\emph{cocycle}}$.  Thus $\overrightarrow{w} $ is purely acyclic. The argument is like that for Proposition 15 by \citet{DPZ}. \end{proof}

\medskip

The central idea in showing that pure acyclicity renders  \emph{max $k$-OP} polynomial-time solvable (Theorem \ref{Thm2}.1) is that one may safely restrict attention to those ordered partitions that respect the order induced by the function $\Gamma$ whose differences generate the edge-weights: 

\begin{definition}
An ordered $k$-partition $\mathcal{P} = \{ P_k >^\prime P_{k-1} >^\prime \dots >^\prime P_1\}$ of a nondecreasing sequence $\gamma _1 \leq \gamma _2 \leq \dots \leq \gamma _m$ of real numbers is \emph{monotone} if $i < j \; \Rightarrow \pi (\gamma_i) \leq ^\prime \pi (\gamma_j)$, where $\leq ^\prime$ refers to the ordering of $\mathcal{P}$'s pieces, and $\pi (\gamma_i)$ denotes the piece $P_s$ for which $\gamma_i \in P_s$.
\end{definition}

Equivalently, monotone partitions are obtained by ``cutting'' the $\gamma$ sequence from line  (\ref{DGorder}) with $k-1$ dividers $\downarrow_i$ :
\begin{equation}\label{dividers}
\underline{\gamma _1, \gamma _2, \dots ,\gamma _{m_1}}\; \downarrow_1\; \underline{\gamma _{m_1+1}, \dots ,  \gamma _{m_2}}\;  \downarrow_2 \;\;\;  \dots \;\;\; \downarrow_{k-1} \;\underline{ \gamma _{m_{k-1}+1}, \dots ,  \gamma _{m} }
\end{equation} 

\begin{lemma} \label{MonoPart}
Given a set $S = \{ \gamma _1 \leq \gamma _2 \leq \dots \leq \gamma _m\}$ of real numbers listed in nondecreasing order let $\overrightarrow{\mathcal{H}}=(\overrightarrow{V}, \overrightarrow{E},\overrightarrow{w})$ be the weighted tournament on $\overrightarrow{V}=S$ for which $\overrightarrow{w}(\gamma _i, \gamma _j) =  \gamma _i - \gamma _j$, for each $i$, $j$ with $1 \leq i \neq j \leq m$.  There exists a monotone ordered $k$-partition of $\overrightarrow{V}$ that
  achieves maximal score (among all $k$-partitions of $\overrightarrow{V}$).\end{lemma}
\vspace{-3mm}
\begin{proof}  [Proof of Lemma \ref{MonoPart}] It is straightforward to show that if some ordered partition $\overrightarrow{\mathcal{P}}$ satisfied $i < j$ with $\pi (\gamma_i) > ^\prime \pi (\gamma_j)$ then swapping $\gamma_i$ for $\gamma_j$ (by moving $\gamma_i$ into the piece to which $\gamma_j$ initially belonged, and $\gamma_j$ into  $\gamma_i$'s initial piece) can never decrease $\overrightarrow{\mathcal{P}}$'s score. A sequence of such swaps converts $\overrightarrow{\mathcal{P}}$ into a monotone partition. \end{proof}

\begin{proof} [Proof of Theorem \ref{Thm2}.1] Given a purely acyclic instance $\overrightarrow{\mathcal{H}}=(\overrightarrow{V}, \overrightarrow{E},\overrightarrow{w})$ of  \emph{max $k$-OP} with $m=|\overrightarrow{V}|$ vertices, calculate the Borda scores from line  (\ref{PlainBorda}). Identify $\overrightarrow{V}$ with these Borda scores  $ \gamma _1 \leq \gamma _2 \leq \dots \leq \gamma _m $, arranged in nondecreasing order.  An exhaustive search would then compute  $\overrightarrow{v}(\overrightarrow{\mathcal{P}})$ for each possible monotone ordered $k$-partition of the $\gamma_j$, of which there are at most $(m-1)^{k-1}$ because there are at most $m-1$ options for placing each divider $\downarrow_i$ in line (\ref{dividers}); output any optimal monotone partition (which is an optimal partition by Lemma \ref{MonoPart}) and its score. This calculation is in $\mathcal{O}( m^{k-1}\log(mW))$ time (where $W$ is the largest weight), but with standard dynamic programming techniques this improves to $\mathcal{O}(k m^{2}\log (mW))$ time (which remains polynomial-time if $k$ is included in the input).
\end{proof}

\begin{proof}  [Proof of Theorem \ref{Thm2}.2]  For any ordered partition $\overrightarrow{\mathcal{P}}$ of a directed graph, the score $\overrightarrow{v}(\overrightarrow{\mathcal{P}})$ is a linear functional on the vector space of all possible edge weightings $\overrightarrow{w}$, so that $\overrightarrow{v}\!_{\overrightarrow{w}} (\overrightarrow{\mathcal{P}}) =  \overrightarrow{v}\!_{\overrightarrow{w}_\emph{cocycle}} (\overrightarrow{\mathcal{P}})+ \overrightarrow{v}\!_{\overrightarrow{w}_\emph{cycle}} (\overrightarrow{\mathcal{P}})$.  
Thus, once we demonstrate that for $2$-partitions $\overrightarrow{v}\!_{\overrightarrow{w}_\emph{cycle}} (\overrightarrow{\mathcal{P}})=0$, it follows that $2$-partitions also satisfy $\overrightarrow{v}\!_{\overrightarrow{w}} (\overrightarrow{\mathcal{P}})  =  \overrightarrow{v}\!_{\overrightarrow{w}_\emph{cocycle}} (\overrightarrow{\mathcal{P}})$.

Next, observe that the multiple options  for fitting a cycle into three levels of an ordered partition (Lemma \ref{4cycleLemma}, Figure \ref{FigTwo}) are severely constrained for ordered partitions having only two levels.   \noindent As suggested by the example in Figure \ref{FigThree}, for $2$-partitions the number of down edges will always equal the number of up edges.  Thus, for the basic cycle $\sigma$ that assigns weight $1$ to each edge that appears in Figure \ref{FigThree}, and weight $0$ to every edge not drawn in,  $\overrightarrow{v}\!_\sigma (\overrightarrow{\mathcal{P}}) = 0$.  By linearity, the same holds for any linear combination of basic cycles, and so we conclude that  for ordered $2$-partitions $\overrightarrow{\mathcal{P}}$, $\overrightarrow{v}\!_{\overrightarrow{w}_\emph{cycle}} (\overrightarrow{\mathcal{P}})=0$. \end{proof}
\begin{figure}[!ht]
{\centering
\includegraphics[height=39mm,width=60mm] {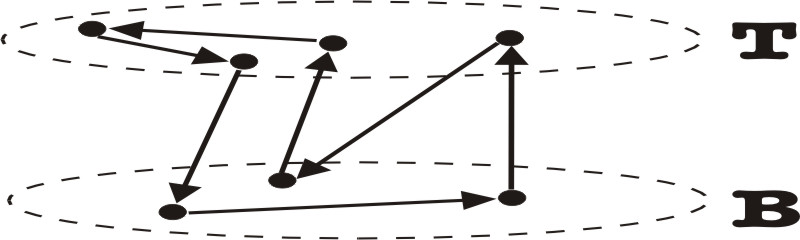}
\caption{Fitting a cycle into two levels.}\label{FigThree}}
\end{figure}

\section{The $(j,k)$-Kemeny Rule} \label{jkSection}

 Suppose that $ \{v_1, \dots , v_m\}$ is a (finite) set of $m$ alternatives, and that voters in a finite set $N$ cast weak (or linear) order ballots, resulting in a profile $\Pi = \{\geq_i\}_{i \in N}$.  The induced weighted tournament $\overrightarrow{\mathcal{H} _\Pi}
 =(\overrightarrow{V},\overrightarrow{E},\overrightarrow{w}_\Pi)$ is as follows: \vspace{-2mm}
\begin{itemize}
\item $\overrightarrow{V} =  \{v_1, \dots , v_m\}$ \vspace{-2mm}
\item $\overrightarrow{E}=\{(v_i, v_j)\, | \, i <j\}$.  Remark: This adds one edge for each two vertices. \vspace{-2mm}
\item $\overrightarrow{w}_\Pi(v_i, v_j) = |\{t\in N\, | \, v_i \geq_t v_j\}| - |\{t\in N\, | \, v_j \geq_t v_i\}|$. These weights are voters' net preferences for $v_i$ over $v_j$.
\end{itemize}

\begin{definition}\label{jkKemeny}
The \emph{$(j,k)$-Kemeny rule} takes, as input, a profile $\Pi$ of $j$-chotomous weak orders on a finite set of alternatives, and outputs the $k$-chotomous weak order(s)\footnote
{
Ties are possible.  When the number of ties is large, there may be an exponential blow-up in the number of orders in the output.  However for the first 7 rules listed in Table 2,  the output can be described in a compact language that describes a class of tied orders in terms of ties among individual alternatives. 
} on the alternatives corresponding to the solution(s) of  \emph{max $k$-OP} for  $\overrightarrow{\mathcal{H} _\Pi}$. The $(j,\mathcal{L})$-, $(\mathcal{L}, k)$- and $(\mathcal{L},\mathcal{L})$-Kemeny rules are defined similarly, with linear ordered ballots when $\mathcal{L}$ appears in the $j$ position, and linearly ordered outputs when $\mathcal{L}$ appears in the $k$ position.\footnote{
Allowing $\mathcal{L}$ as a value is useful, but represents an abuse of notation, as $j$ or $k$ then no longer serve as parameters that take only fixed numerical values.
}  A $2_r$ in either position refers to {\emph r-valent} dichotomous weak orders, corresponding to  ordered $2$-partitions $\{ T > B\}$ for which $|T| = r$.

\end{definition}

Note that a dichotomous weak order  $\{T>B\}$ can be interpreted as an approval ballot approving all alternatives in $T$. A $1$-valent (aka univalent) order $\{\{x\}>B\}$ can be interpreted as naming $x$ as winner (when it is the output) or as a plurality ballot for $x$ (when it is an input).  The $r$-valent case with $r>1$ is also of interest for representing winning committees of exogenously specified size (as in \citealp{Kilgour}; \citealp{EFSS}; \citealp{ABCEFW}).  With that understanding, approval voting will be our first example of a special case of $(j,k)$-Kemeny.  The list of such special cases contains a number of other familiar rules as well, including all in Table 2 (Section \ref{introduction}). Proofs justifying the Table 2 entries are straightforward, but (for reasons we discuss in the next section) these will appear in a later paper by \citet{Z2018}.  Our proof of Proposition \ref{AV}  for approval voting should suffice to give the flavor:

\begin{proposition} \label{AV} $(2,2_1)$-Kemeny is approval voting, with outcome the approval winner(s).
\end{proposition}

\begin{proof} [Proof of Proposition \ref{AV}] Consider a profile $\Pi^\star$ consisting of a single ballot $\{T > B\}$, corresponding to a single approval ballot of $T$, with induced weighted tournament $\overrightarrow{\mathcal{H}} _{\Pi^\star}=(\overrightarrow{V},\overrightarrow{E}, \overrightarrow{w}_{\Pi^\star})$.  It is easy to see that for any two alternatives $x$ and $y$, the weight $\overrightarrow{w}_{\Pi^\star} (x,y)$ on the $x \rightarrow y$ edge is the 
difference $\emph{App}_{\Pi^\star}(x) -  \emph{App}_{\Pi^\star}(y)$ in their approval scores (which will be $+1$, $-1$, or $0$).  Now consider a more general profile $\Pi$ with a number of ballots.  The weight $\overrightarrow{w}_{\Pi} (x,y)$ on the $x \rightarrow y$ edge of $\overrightarrow{\mathcal{H} _{\Pi}}$ is likewise the difference $\emph{App}_{\Pi}(x) -  \emph{App}_{\Pi}(y)$ in approval scores, because it is a sum of the contributions $\emph{App}_{\Pi^\star}(x) -  \emph{App}_{\Pi^\star}(y)$ made by the individual ballots.  The score of a univalent ordered partition $\{ \{x\} > \overrightarrow{V} \setminus  \{x\} \}$ will then be the sum 
\[
\Sigma_{y \in \overrightarrow{V} \setminus  \{x\} } [\emph{App}(x) - \emph{App}(y)],
\]
which is maximized when $x$ has a greatest approval score. \end{proof}

The Theorem \ref{Thm2} results now lift\footnote{
Given that the edge weights for $(j,k)$-Kemeny arise from net preferences, we can replace $W$ (in the Theorem \ref{Thm2} proof) with $n$, the number of ballots or voters, so that the final conclusion (of $\mathcal{O}(k m^{2}\log (mW))$ time) becomes $\mathcal{O}(k m^{2}\log (mn))$ time.
} immediately to $(j,k)$-Kemeny, showing:

\begin{theorem}\label{Thm4}
The problem of determining the winning ordering for a $(j,k)$-Kemeny election is in $P$ whenever $j$ or $k$ is equal to $2$ (or to $2_r$ for $r\geq 1$ any integer), and also whenever $\overrightarrow{w}_{\Pi_{\emph{cycle}}} =0$.  In particular, winner determination is in $P$ for the first seven rules of Table 2.  Also, for profiles satisfying $\overrightarrow{w}_{\Pi_{\emph{cycle}}}=0$, the  $(j, \mathcal{L})$-Kemeny outcome is the linear ranking induced by Borda scores; in particular, the original Kemeny rule agrees with Borda.
\end{theorem}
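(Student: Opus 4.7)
Definition \ref{jkKemeny} reduces $(j,k)$-Kemeny winner determination to max-$k$OP on the induced weighted tournament $\overrightarrow{\mathcal{H}_\Pi}$, $\overrightarrow{w}_\Pi$, so each tractability claim amounts to verifying that one of Theorem \ref{Thm2} or Corollary \ref{Cor2} applies, with a small bookkeeping step needed for the $2^\star$ variants. The algorithms supporting those two results return the optimizing ordered partition, not just its score, which is what winner determination asks for.

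First I would dispose of the two easier cases. For $k=2$, Corollary \ref{Cor2} applies verbatim; for $k=2^\star$, the search is restricted to the $|V|$ univalent partitions $\{\{x\} > V \setminus \{x\}\}$, each of whose score is polynomial-time computable from $\overrightarrow{w}_\Pi$, so picking a maximizer is polynomial. The hypothesis $\overrightarrow{w}_{\Pi_{\emph{cycle}}} = \mathbf{0}$ puts us in Theorem \ref{Thm2} directly. The main case is $j \in \{2, 2^\star\}$, where the plan is to show that dichotomous input ballots force $\overrightarrow{w}_\Pi$ to be purely acyclic, after which Theorem \ref{Thm2} again finishes the job. Writing $\emph{App}_i(x) = 1$ when voter $i$ approves $x$ and $0$ otherwise, a short check across the four placements of $x,y$ inside $T_i \cup B_i$ shows that voter $i$'s ballot contributes exactly $\emph{App}_i(x) - \emph{App}_i(y)$ to $\overrightarrow{w}_\Pi(x,y)$. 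Summing over $i \in N$ gives $\overrightarrow{w}_\Pi(x,y) = \emph{App}_\Pi(x) - \emph{App}_\Pi(y)$, so $\overrightarrow{w}_\Pi$ is difference generated in the sense of Definition \ref{DiffGenDef}, and the lemma equating difference generation with pure acyclicity supplies $\overrightarrow{w}_{\Pi_{\emph{cycle}}} = \mathbf{0}$. The $2^\star$ subcase is identical, since univalence merely restricts each $T_i$ to a singleton. Because each of rules (1)--(5) of Proposition \ref{SocialChoiceApps} has at least one argument in $\{2, 2^\star\}$, all five lie in $P$.

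For the final assertion I would invoke Lemma \ref{MonoPart}. A purely acyclic $\overrightarrow{w}_\Pi$ is difference generated by the scaled Borda function $\Gamma(x) = x^\beta / |V|$, so Lemma \ref{MonoPart} allows the $(\underline{\hspace{4mm}}_1, |V|)$-Kemeny search to be restricted to monotone ordered $|V|$-partitions of the vertices enumerated by non-decreasing $\Gamma$. With $k = |V|$ every piece must be a singleton, so the monotone optimizer is precisely the ranking by scaled Borda scores, unique up to genuine Borda ties; specializing to $j = |V|$ recovers Kemeny $=$ Borda on purely acyclic profiles. The only delicate step in the whole proof is the $j \in \{2, 2^\star\}$ case check: the reversal convention of Definition \ref{RCDef} must be applied consistently so that the four placements of $x,y$ combine cleanly into $\emph{App}_i(x) - \emph{App}_i(y)$ without sign errors, after which the remainder is a direct appeal to machinery already in place.
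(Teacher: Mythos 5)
Your proposal is correct and follows essentially the same route as the paper, which states Theorem \ref{Thm4} as an immediate lift of Theorem \ref{Thm2}, Theorem \ref{Thm3}/Corollary \ref{Cor2}, and Lemma \ref{MonoPart}; in particular your key step --- that dichotomous ($j\in\{2,2^\star\}$) ballots make $\overrightarrow{w}_\Pi(x,y)=\emph{App}_\Pi(x)-\emph{App}_\Pi(y)$, hence difference generated and purely acyclic --- is exactly the reasoning the paper sketches in the introduction (``$j\le 2$ guarantees $\overrightarrow{w}_{\emph{cycle}}=0$'') and in the proof of Proposition \ref{SocialChoiceApps}. Your write-up simply supplies the details the paper leaves implicit.
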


We need to be a bit more careful when lifting the $\mathit{NP}$-hardness results from Theorem \ref{Thm1} to the context of $(j,k)$-Kemeny, however.  To argue for $\mathit{NP}$-hardness when  $j \geq 3$ or  $j=\mathcal{L}$, we need to know that the specific weighted tournaments $\overrightarrow{\mathcal{H}_\mathcal{G}}$ constructed in the proofs of Theorems \ref{Thm1}.2 and \ref{Thm1}.3 are induced  by some profile $\Pi$ of $j$-chotomous orders ($j \geq 3$), and for some profile of linear orders.  Actually, from these proofs it can easily be seen that inducing some scalar multiple $C\overrightarrow{w}$ of the weights as $\overrightarrow{w}_\Pi$ would suffice, for each of these types of profile.  But given an arbitrary integer-valued $\overrightarrow{w}$ for $V = \{v_1, v_2, v_3, \dots, v_m\}$, and working for the moment with $j=3$, we can construct a profile of trichotomous weak orders satisfying $\overrightarrow{w}_\Pi = 2\overrightarrow{w}$, as follows: the profile with two trichotomous weak orders

\begin{equation}
\label{2profile}
 \Pi = \{v_1\} > \{v_2\} >\{v_3, \dots , v_m\}\; ; \; \{v_3, \dots , v_m\} > \{v_1\} > \{v_2\}
\end{equation}

\noindent satisfies $\overrightarrow{w}_\Pi (v_1,v_2) = 2$ and assigns weight $0$ to every other edge.  Thus by combining profiles similar to $\Pi$ we can build an arbitrary function $\overrightarrow{w}$ taking even integer values.  For $j>3$, modify line \ref{2profile} by  breaking $\{v_3, \dots , v_m\}$ into several pieces, ordered oppositely by the two orderings of $\Pi$. This constructions generalizes those by  \citet{McG} and by  \citet{Debord}.  We have established:

\begin{theorem}\label{Thm5}
The problem of determining the winning ordering for a $(j,k)$-Kemeny election is $\mathit{NP}$-hard if $j\geq 3$ and $k\geq 3$ both hold or if $j=\mathcal{L}$ and $k\geq 3$ both hold.  In particular winner determination is $\mathit{NP}$-hard for $(3,3)$-Kemeny.
\end{theorem}

None of our reasoning here shows $\mathit{NP}$-hardness when  $k =\mathcal{L}$; in particular,
Theorem \ref{Thm5} does not draw hardness conclusions for $(\mathcal{L},\mathcal{L})$-Kemeny (that is, for the original Kemeny rule itself) or for $(j,\mathcal{L})$-Kemeny with $j \geq 3$, because  \emph{max cut} is \emph{not} polynomially reducible to ``max $\mathcal{L}$-cut''\footnote{``Max $\mathcal{L}$-cut" is in quotes because it is silly (there exists only one unordered partition of a vertex set into singletons, so it must be the optimal such partition).  At first, it was frustrating that our methods did not seem to apply to the original Kemeny rule itself.  However, with Theorem \ref{Thm1}.2 the fundamental nature of this obstacle became clear; the methods we use here establish hardness for the transitive subcase, so they cannot possibly show hardness of Kemeny winner, which is \emph{not} hard for that case.} (unless $\mathit{P = NP}$, of course).  Nonetheless, our argument that computational complexity arises from the cyclic component also applies to the cases missing from Theorem \ref{Thm5}.  We already know \cite{BTT} that the original Kemeny rule winner problem is $\mathit{NP}$-hard, and the last clause of Theorem \ref{Thm4} tells us that Kemeny reduces to a computationally easy rule when $\overrightarrow{w}_{\Pi_{\emph{cycle}}}=0$.  As for $(j,\mathcal{L})$-Kemeny, we have just shown that for $j \geq 3$ the induced weights $\overrightarrow{w}_\Pi$ from profiles of $j$-chotomous weak orders are essentially as general as those arising from linear rankings, so winner determination is as hard as for the original Kemeny rule.

A second interesting question was raised by one of our COMSOC referees:
what happens to tractability when the cyclic component is simple? What happens, for example, if $\overrightarrow{w}_{\emph{cycle}}$ can be written as a sum of only one or two simple cycles?  We conjecture (but with low confidence) that winner determination for $(3,3)$-Kemeny would  become tractable in this case.  In this connection, it seems worth mentioning that the dimension of the cocyclic subspace grows linearly with the number of alternatives, while the dimension of the cyclic space grows quadratically.  In a sense, then, the cyclic component is inherently the more complicated one, so that sharply limiting its complexity places a rather strong restriction on the underlying profile.

\section{Connections:\! the Median Procedure, Generalized Scoring Rules,\\ \hspace{4.5 mm} and Future Work\label{Conclusions}}

The $(j,k)$-Kemeny procedure presented here is closely related to, but distinct from, the \emph{median procedure} (\citealp{BM}; \citealp{Hudry}), a better-known general method for aggregating binary relations from one class into a relation from a possibly different class.
When applied to the median procedure, several of the Table 2 restrictions yield the exact same aggregation rule as they do when applied to $(j,k)$-Kemeny; this happens, for example, when $j = \mathcal{L}$ or $k = \mathcal{L}$.  In particular, the $(\mathcal{L},\mathcal{L})$-median procedure is the Kemeny voting rule\footnote{
In this connection, social choice theorists tend to think of the median procedure as a generalization of the Kemeny voting rule, but \citet{BM} suggest the idea had parallel roots in a variety of fields.
}, the $(2,\mathcal{L})$-median procedure yields, as outcome, the ranking by approval score, and the $(\mathcal{L}, 2_1)$-median procedure yields the Borda winner(s).  

But while $(2,2)$-Kemeny yields the mean rule, the $(2,2)$-median procedure does not (it seems unlikely that any reasonable version of the mean rule arises as a median procedure restriction) and the $(2, 2_1)$-median procedure provably differs from approval voting.  So what explains the pattern of agreement and disagreement between these two general procedures?  

We will address this issue in a planned sequel \citep{Z2018}, but briefly sketch the analysis here.  The median procedure (as well as the Kemeny voting rule) is commonly defined in terms of distance minimization via an appropriate metric $\delta$, but there is an equivalent \emph{inner product} formulation that is a bit friendlier to the analysis we have in mind.  Suppose we wish to aggregate  a profile $\Pi = \{R_i\}_{i \in N}$ of binary relations belonging to some specified \emph{input class} $\mathcal{I}$, to obtain a binary relation $S$ belonging to some specified \emph{output class} $\mathcal{O}$.  We represent each binary relation $R$ (on a set $A$ with $|A| = m$) via its \emph{symmetric characteristic vector} $\mathbf{x}^R = (x^R _1, x^R _2, \dots , x^R _{m(m-1)}) $; here $x^R _{i} =1$ if the $i^{\emph{th}}$ ordered pair (in some enumeration of all ordered pairs $(a,b)$ in $A \times A$ satisfying $a \neq b$) is a member of $R$, and $x^R _{i} =-1$ otherwise.  Now each $R_i$ awards $\mathbf{x}^{R_i} \cdot \mathbf{x}^S$ points to any relation $S \in \mathcal{O}$.  Sum these points over all $i \in N$ to obtain the \emph{score} of $S$; the outcome of the aggregation is the $S \in \mathcal{O}$ with highest score.  

This method coincides with the median procedure, as usually defined.\footnote{The distance $\delta$ from $R_i$ to $S$ is related to the point award $\mathbf{x}^{R_i}\! \cdot \mathbf{x}^S$ via a negative affine transform ($\delta \mapsto A\delta +B$, where $A, B$ are real constants with $A < 0$), so that minimizing summed distance is equivalent to maximizing summed points.} Because the method awards points to binary relations rather than to the underlying alternatives in $A$, it constitutes a \emph{generalized scoring rule} (in the sense of \citealp{Myerson}; \citealp{Zwicker};  \citealp{CRX}).  

The difference between the median procedure and $(j,k)$-Kemeny can now be explained in terms of the vector space $\mathbb{R}^{m(m-1)}$ spanned by the characteristic vectors for all possible binary relations on $A$.  This space is subject to a second orthogonal decomposition, of $\mathbb{R}^{m(m-1)}$ into symmetric and antisymmetric subspaces $\mathbf{V}\!_{sym}$ and $\mathbf{V}\!_{antisym}$ (see footnote \ref{TwoEdges}).  Here $\mathbf{V}\!_{sym}$ is the span of all characteristic vectors $\mathbf{x}^{R}$ of symmetric binary relations (satisfying $(a,b) \in R \Leftrightarrow (b,a) \in R$) and $\mathbf{V}\!_{antisym}$ is the span of vectors $\mathbf{x}^{R}$ of antisymmetric binary relations (satisfying $(a,b) \in R \Leftrightarrow (b,a) \notin R$).  Let $asym(\mathbf{x}^{R}$) denote the orthogonal projection of $\mathbf{x}^{R}$ onto $\mathbf{V}\!_{antisym}$.  

We obtain an equivalent formulation of $(j,k)$-Kemeny by modifying the inner product version of the median procedure (as defined two paragraphs previous) as follows: use $asym(\mathbf{x}^{R_i} ) \cdot asym(\mathbf{x}^S )$ in place of the original $\mathbf{x}^{R_i} \! \cdot \mathbf{x}^S$.  The matter of whether some particular restriction of the median procedure agrees with the corresponding restriction for $(j,k)$-Kemeny now comes down to the role of the symmetric component in the aggregation.  If either $\mathcal{I}$ or $\mathcal{O}$ live entirely within $\mathbf{V}\!_{antisym}$, the symmetric component has no effect on the aggregation -- projection onto $\mathbf{V}\!_{antisym}$ thus has no effect on the points awarded, and the restriction of the median procedure via $\mathcal{I}$ and $\mathcal{O}$ coincides with the corresponding restriction for $(j,k)$-Kemeny.  In particular, this is what happens whenever linear rankings are used as $\mathcal{I}$ or as $\mathcal{O}$; it does not happen when $\mathcal{I}$ = $\mathcal{O}$ = the class of dichotomous weak orders.

Thus, while the proofs justifying lines of Table 2 are straightforward, we prefer to postpone these arguments to a planned sequel \citep{Z2018}, which will focus on the role of orthogonal decomposition in these sorts of procedures and in the families of specific rules that arise as their restrictions.  There we will explore more precisely the exact relationships between special cases of the median procedure, and corresponding special cases of $(j,k)$-Kemeny.

\acks{We thank Matthew Anderson, Markus Brill, Dominik Peters, and Alan D. Taylor for help with content and presentation, and the referees for an earlier (COMSOC 2016 conference) version, for suggesting interesting follow-up questions. Comments  by the journal referees significantly improved clarity and organization.}

\vskip 0.2in
\bibliography{WZjair}

\begin{thebibliography}{}

\bibitem[\protect\BCAY{Aziz, Brill, Conitzer, Elkind, Freeman,\ \BBA\
  Walsh}{Aziz et~al.}{2017}]{ABCEFW}
Aziz, H., Brill, M., Conitzer, V., Elkind, E., Freeman, R., \BBA\ Walsh, T.
  \BBOP2017\BBCP.
\newblock \BBOQ Justified representation in approval-based committee
  voting\BBCQ\
\newblock {\Bem Social Choice and Welfare}, {\Bem 48}, 461--485.

\bibitem[\protect\BCAY{Barth\'elemy\ \BBA\ Monjardet}{Barth\'elemy\ \BBA\
  Monjardet}{2009}]{BM}
Barth\'elemy, J.~P.\BBACOMMA\  \BBA\ Monjardet, B. \BBOP2009\BBCP.
\newblock \BBOQ The median procedure in cluster analysis and social choice
  theory\BBCQ\
\newblock {\Bem Mathematical Social Sciences}, {\Bem 1}, 235--267.

\bibitem[\protect\BCAY{Bartholdi, Tovey,\ \BBA\ Trick}{Bartholdi
  et~al.}{1989}]{BTT}
Bartholdi, J.~I., Tovey, C.~A., \BBA\ Trick, M.~A. \BBOP1989\BBCP.
\newblock \BBOQ Voting schemes for which it can be difficult to tell who won
  the election\BBCQ\
\newblock {\Bem Social Choice and Welfare}, {\Bem 6}, 157--165.

\bibitem[\protect\BCAY{Brandl, Brandt,\ \BBA\ Seedig}{Brandl
  et~al.}{2016}]{BBS}
Brandl, F., Brandt, F., \BBA\ Seedig, H.~G. \BBOP2016\BBCP.
\newblock \BBOQ Consistent probabilistic social choice\BBCQ\
\newblock {\Bem Econometrica}, {\Bem 84\/}(5), 1839--1880.

\bibitem[\protect\BCAY{Brandl\ \BBA\ Peters}{Brandl\ \BBA\ Peters}{2017}]{BP}
Brandl, F.\BBACOMMA\  \BBA\ Peters, D. \BBOP2017\BBCP.
\newblock \BBOQ An axiomatic characterization of the {B}orda mean rule\BBCQ\
\newblock In {\Bem Proceedings of the 11th Multidisciplinary Workshop on
  Advances in Preference Handling (MPREF), to appear}.

\bibitem[\protect\BCAY{Conitzer, Rognlie,\ \BBA\ Xia}{Conitzer
  et~al.}{2009}]{CRX}
Conitzer, V., Rognlie, M., \BBA\ Xia, L. \BBOP2009\BBCP.
\newblock \BBOQ Preference functions that score rankings and maximum likelihood
  estimation\BBCQ\
\newblock In {\Bem Proceedings of the Twenty-First International Joint
  Conference on Artificial Intelligence (IJCAI-09)}, \BPGS\ 109--115.

\bibitem[\protect\BCAY{Croom}{Croom}{1978}]{Croom}
Croom, F. \BBOP1978\BBCP.
\newblock {\Bem Basic Concepts of Algebraic Topology}.
\newblock New York: Springer-Verlag.

\bibitem[\protect\BCAY{Debord}{Debord}{1987}]{Debord}
Debord, B. \BBOP1987\BBCP.
\newblock \BBOQ Caract\'{e}risation des matrices des pr\'{e}f\'{e}rences nettes
  et m\'{e}thodes d'aggr\'{e}gation associ\'{e}es\BBCQ\
\newblock {\Bem Math\'{e}matiques et sciences humaines}, {\Bem 97}, 5--17.

\bibitem[\protect\BCAY{Duddy\ \BBA\ Piggins}{Duddy\ \BBA\ Piggins}{2013}]{DP}
Duddy, C.\BBACOMMA\  \BBA\ Piggins, A. \BBOP2013\BBCP.
\newblock \BBOQ Collective approval\BBCQ\
\newblock {\Bem Mathematical Social Sciences}, {\Bem 65}, 190--194.

\bibitem[\protect\BCAY{Duddy, Piggins,\ \BBA\ Zwicker}{Duddy
  et~al.}{2016}]{DPZ}
Duddy, C., Piggins, A., \BBA\ Zwicker, W.~S. \BBOP2016\BBCP.
\newblock \BBOQ Aggregation of binary evaluations: a {B}orda-like
  approach\BBCQ\
\newblock {\Bem Social Choice and Welfare}, {\Bem 46}, 301--322.

\bibitem[\protect\BCAY{Elkind, Faliszewski, Skowron,\ \BBA\ Slinko}{Elkind
  et~al.}{2017}]{EFSS}
Elkind, E., Faliszewski, P., Skowron, P., \BBA\ Slinko, A. \BBOP2017\BBCP.
\newblock \BBOQ Properties of multiwinner voting rules\BBCQ\
\newblock {\Bem Social Choice and Welfare}, {\Bem 48}, 599--632.

\bibitem[\protect\BCAY{Fishburn}{Fishburn}{1977}]{F}
Fishburn, P. \BBOP1977\BBCP.
\newblock \BBOQ Condorcet social choice functions\BBCQ\
\newblock {\Bem SIAM Journal on Applied Mathematics}, {\Bem 33}, 469--489.

\bibitem[\protect\BCAY{Hammer, Ibaraki,\ \BBA\ Peled}{Hammer
  et~al.}{1981}]{HIP}
Hammer, P.~L., Ibaraki, T., \BBA\ Peled, U.~N. \BBOP1981\BBCP.
\newblock \BBOQ Threshold numbers and threshold completions\BBCQ\
\newblock {\Bem Annals of Discrete Math}, {\Bem 11}, 125--145.

\bibitem[\protect\BCAY{Harary}{Harary}{1959}]{Harary}
Harary, F. \BBOP1959\BBCP.
\newblock \BBOQ Graph theory and electrical networks\BBCQ\
\newblock {\Bem IRE Trans. Circuit Theory}, {\Bem 6\/}(5), 95--109.

\bibitem[\protect\BCAY{Hemaspaandra, Spakowski,\ \BBA\ Vogel}{Hemaspaandra
  et~al.}{2005}]{HSV}
Hemaspaandra, E., Spakowski, H., \BBA\ Vogel, J. \BBOP2005\BBCP.
\newblock \BBOQ The complexity of {K}emeny elections\BBCQ\
\newblock {\Bem Theoretical Computer Science}, {\Bem 349}, 382--391.

\bibitem[\protect\BCAY{Hudry}{Hudry}{2012}]{Hudry}
Hudry, O. \BBOP2012\BBCP.
\newblock \BBOQ On the computation of median linear orders, of median complete
  preorders and of median weak orders\BBCQ\
\newblock {\Bem Mathematical Social Sciences (special issue on Computational
  Foundations of Social Choice)}, {\Bem 64}, 2--10.

\bibitem[\protect\BCAY{Kilgour}{Kilgour}{2010}]{Kilgour}
Kilgour, D.~M. \BBOP2010\BBCP.
\newblock \BBOQ Approval balloting for multi-winner elections\BBCQ\
\newblock In Laslier, J.-F.\BBACOMMA\  \BBA\ Sanver, M.~R.\BEDS, {\Bem Handbook
  on Approval Voting}, \BPGS\ 105--124. Springer-Verlag.

\bibitem[\protect\BCAY{Kirchoff}{Kirchoff}{1847}]{Kirchoff}
Kirchoff, G. \BBOP1847\BBCP.
\newblock \BBOQ {\"U}ber die {A}ufl{\"o}sung der {G}leichungen, auf welche man
  bei der {U}ntersuchung der linearen v{}ertheilung galvanischer {S}tr{\"o}me
  gef{\"u}hrt wird\BBCQ\
\newblock {\Bem Annalen der Physik}, {\Bem 148\/}(12).

\bibitem[\protect\BCAY{MacWilliams}{MacWilliams}{1958}]{MacWilliams}
MacWilliams, F.~J. \BBOP1958\BBCP.
\newblock \BBOQ An algebraic proof of {K}irchoff's network theorem\BBCQ\
\newblock {\Bem Bell Telephone Laboratories Memorandum}.

\bibitem[\protect\BCAY{McGarvey}{McGarvey}{1953}]{McG}
McGarvey, D.~C. \BBOP1953\BBCP.
\newblock \BBOQ A theorem on the construction of voting paradoxes\BBCQ\
\newblock {\Bem Econometrica}, {\Bem 21\/}(4), 608--610.

\bibitem[\protect\BCAY{Myerson}{Myerson}{1995}]{Myerson}
Myerson, R.~B. \BBOP1995\BBCP.
\newblock \BBOQ Axiomatic derivation of scoring rules without the ordering
  assumption\BBCQ\
\newblock {\Bem Social Choice and Welfare}, {\Bem 12}, 59--74.

\bibitem[\protect\BCAY{Saari}{Saari}{1994}]{Saari}
Saari, D. \BBOP1994\BBCP.
\newblock {\Bem Geometry of Voting}.
\newblock Springer.

\bibitem[\protect\BCAY{Xia\ \BBA\ Conitzer}{Xia\ \BBA\ Conitzer}{2011}]{XC}
Xia, L.\BBACOMMA\  \BBA\ Conitzer, V. \BBOP2011\BBCP.
\newblock \BBOQ A maximum likelihood approach towards aggregating partial
  orders\BBCQ\
\newblock In {\Bem Proceedings of the 22nd International Joint Conference on
  Artificial Intelligence (IJCAI-11)}, \BPGS\ 446--451.

\bibitem[\protect\BCAY{Young}{Young}{1974}]{Young}
Young, H.~P. \BBOP1974\BBCP.
\newblock \BBOQ An axiomatization of {B}orda's rule\BBCQ\
\newblock {\Bem Journal of Economic Theory}, {\Bem 9}, 43--52.

\bibitem[\protect\BCAY{Zwicker}{Zwicker}{1991}]{Z}
Zwicker, W.~S. \BBOP1991\BBCP.
\newblock \BBOQ The voters' paradox, spin, and the {B}orda count\BBCQ\
\newblock {\Bem Mathematical Social Sciences}, {\Bem 22}, 187--227.

\bibitem[\protect\BCAY{Zwicker}{Zwicker}{2008}]{Zwicker}
Zwicker, W.~S. \BBOP2008\BBCP.
\newblock \BBOQ Consistency without neutrality in voting rules: When is a vote
  an average?\BBCQ\
\newblock {\Bem Mathematical and Computer Modelling (special issue on
  Mathematical Modeling of Voting Systems and Elections: Theory and
  Application)}.

\bibitem[\protect\BCAY{Zwicker}{Zwicker}{2018}]{Z2018}
Zwicker, W.~S. \BBOP2018\BBCP.
\newblock \BBOQ Orthogonal decomposition and the aggregation of binary
  relations\BBCQ\
\newblock {\Bem Working Paper}.

\end{thebibliography}
\bibliographystyle{theapa}

\end{document}